\newtheorem{assumption}{Assumption}
\newtheorem{lemma}{Lemma}
\newtheorem{theorem}{Theorem}
\theoremstyle{remark}
\newtheorem{remark}{Remark}
\DeclareMathOperator*{\plim}{plim}
\DeclareMathOperator*{\argmax}{argmax}
\DeclareMathOperator*{\argmin}{argmin}
\DeclareMathOperator*{\argsup}{argsup}
\newcommand{\dd}{\,\mathrm{d}}
\newcommand{\Deq}{\overset{d}{=}}
\newcommand{\Dlim}{\overset{d}{\longrightarrow}}
\newcommand{\mT}{\lfloor m T \rfloor}
\newcommand{\sM}{\lfloor s M \rfloor}
\newcommand{\rT}{\lfloor r T \rfloor}
\newcommand{\sT}{\lfloor s T \rfloor}
\newcommand{\zT}{\lfloor z T \rfloor}
\newcommand{\pplim}{\overset{p}{\longrightarrow}}
\def\T{{\prime}}
\begin{document}

\def\spacingset#1{\renewcommand{\baselinestretch}%
{#1}\small\normalsize} \spacingset{1}


\title{\bf Backward CUSUM for Testing and Monitoring Structural Change
with an Application to COVID-19 Pandemic Data}

\author{Sven Otto \\
Institute of Finance and Statistics, University of Bonn, \\
sven.otto@uni-bonn.de \bigskip \\
Jörg Breitung\thanks{
    Corresponding author: Jörg Breitung, Institute of Econometrics and Statistics, University of Cologne, Albertus-Magnus-Platz, 50923 Cologne, Germany. Mail: breitung@statistik.uni-koeln.de}
 \\
Institute of Econometrics and Statistics, University of Cologne, \\ breitung@statistik.uni-koeln.de }

\maketitle

\begin{abstract}
\noindent
It is well known that the conventional cumulative sum (CUSUM) test suffers from low power and large detection delay.
In order to improve the power of the test, we propose two alternative statistics.
The backward CUSUM detector considers the recursive residuals in reverse chronological order, whereas the stacked backward CUSUM detector sequentially cumulates a triangular array of backwardly cumulated residuals.
A multivariate invariance principle for partial sums of recursive residuals is given, and the limiting distributions of the test statistics are derived under local alternatives.
In the retrospective context, the local power of the tests is shown to be substantially higher than that of the conventional CUSUM test if a break occurs in the middle or at the end of the sample.
When applied to monitoring schemes, the detection delay of the stacked backward CUSUM is found to be much shorter than that of the conventional monitoring CUSUM procedure.
Furthermore, we propose an estimator of the break date based on the backward CUSUM detector and show that in monitoring exercises this estimator tends to outperform the usual maximum likelihood estimator.
Finally, an application of the methodology to COVID-19 data is presented.
\end{abstract}

\noindent
{\it Keywords:}  Sequential tests; Recursive residuals; Open-end monitoring; Local delay; Breakpoint estimation.
\vfill

\newpage
\spacingset{1.5}

\section{Introduction} \label{sec:introduction}

Cumulative sums have become a standard statistical tool for testing and monitoring structural changes in time series models.
The CUSUM test was introduced by \cite{brown1975} as a
test for structural breaks in the coefficients of a linear regression model
$y_t =  x_t^\T  \beta_t + u_t$ with time index $t$, where $\beta_t$ denotes the coefficient vector, $x_t$ is the vector of regressor variables and $u_t$ is a zero mean error term.
Under the null hypothesis, there is no structural change in $\beta_t$, while, under the alternative hypothesis, the coefficient vector changes at unknown time $T^* \leq T$.

Sequential tests, such as the CUSUM test, consist of a detector statistic and a critical boundary function.
The CUSUM detector sequentially cumulates standardized one-step ahead forecast errors, which are also referred to as recursive residuals.
The detector is evaluated for each time point within the testing period, and, if its path crosses the boundary function at least once, the null hypothesis is rejected. If the endpoint of the sample is fixed and the test is applied once to the full sample by comparing the path of the detector with the boundary function, the test is called a retrospective test (henceforth: {\it$R$-test}).
A variety of {\it$R$-tests} have been proposed in the literature (for recent reviews, see \citealp{robbins2011}, \citealp{aue2013}, and \citealp{casini2019c}).

Since the seminal work of \cite{chu1996}, increasing interest has been focused on monitoring structural stability in real time.
Sequential monitoring procedures (henceforth: {\it$M$-tests}) consist of a detector statistic and a boundary function that are evaluated for periods beyond some historical time span.
The monitoring time span with $t>T$ can either have a fixed endpoint $M < \infty$ or an infinite horizon.
In the fixed endpoint setting, the monitoring period starts at $T+1$ and ends at $M$, while the boundary function depends on the ratio $m=M/T$.
In case of an infinite horizon, the monitoring time span does not need to be specified before the monitoring procedure starts.
These two monitoring schemes are also referred to as closed-end and open-end procedures (see \citealp{kirch2015}).
The null hypothesis of no structural change is rejected whenever the path of the detector crosses some critical boundary function for the first time.
Monitoring procedures for a fixed endpoint were proposed in \cite{leisch2000}, \cite{zeileis2005}, \cite{wied2013}, and \cite{dette2020}, whereas \cite{chu1996}, \cite{horvath2004}, \cite{aue2006}, \cite{fremdt2015}, and \cite{gosmann2021} considered an infinite monitoring horizon. In recent years, {\it$M$-tests} have become popular as tests for speculative bubbles in financial markets (e.g. \citealp{phillips2011}, \citealp{homm2012}, \citealp{astill2018}).
\vspace{3ex}

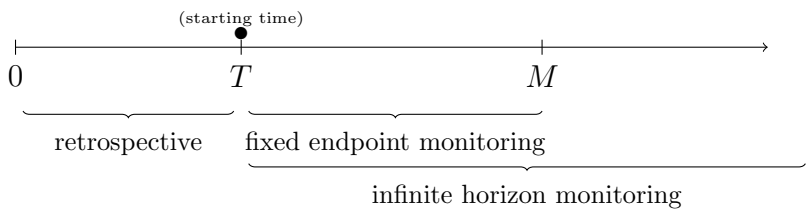
\begin{figure}[h]
\caption{Retrospective testing and monitoring}
\begin{center}
\begin{tikzpicture}
  \draw[->] (-0,0) -- (10,0) node [below] {};
  \draw (0,0.1) -- (0,-0.1) node [below] {$0$};
	\draw (3,0.1) -- (3,-0.1) node [below] {$T$};
	\draw (7,0.1) -- (7,-0.1) node [below] {$M$};
\tikzset{
    position label/.style={
       below = 3pt,
       text height = 1.5ex,
       text depth = 1ex
    },
   brace/.style={
     decoration={brace, mirror},
     decorate
   }
}
\node [position label] (here) at (3, 0.85) {\tiny{(starting time)}};
\node [position label] (here) at (3, 0.6) {$\bullet$};
\node [position label] (a) at (0.1,0) {};
\node [position label] (b1) at (2.9,0) {};
\node [position label] (b2) at (3.1,0) {};
\node [position label] (c) at (7,0) {};
\node [position label] (d) at (3.1,-0.7) {};
\node [position label] (e) at (10.5,-0.7) {};
\draw [brace] (a.south) -- (b1.south) node [position label, pos=0.5] {\footnotesize{retrospective}};	
\draw [brace] (b2.south) -- (c.south) node [position label, pos=0.5] {\footnotesize{fixed endpoint monitoring}};	
\draw [brace] (d.south) -- (e.south) node [position label, pos=0.5] {\footnotesize{infinite horizon monitoring}};		
\end{tikzpicture}
\end{center}
\label{testingschemes}
\end{figure}

A well-known drawback of the conventional CUSUM {\it$R$-test} is its low power, whereas the CUSUM {\it$M$-test} may exhibit large detection delays.
This is due to the fact that the pre-break recursive residuals are uninformative, as their expectation is equal to zero up to the break date, while the recursive residuals have a non-zero expectation after the break.
Hence, the cumulative sums of the recursive residuals contain a large number of uninformative residuals that only add noise to the statistic.
In contrast, if one cumulates the recursive residuals backwardly from the end of the sample to the beginning, the cumulative sum collects the informative residuals first, and the likelihood of exceeding the critical boundary will typically be larger than when cumulating residuals from the beginning onwards. In this paper, we show that backward CUSUM test procedures may indeed have a much higher power and lower detection delays than the conventional CUSUM {\it$R$-} and {\it$M$-tests}.
\citealt{phillips2018} proposed a reverse sample scheme for the PSY procedure that is used for detecting crises (bubble collapses). 
The main difference with our approach is that their regression is performed in reverse order, while our approach estimates the model in the original time but reverses the order of the (recursive) residuals.

Another way of motivating the backward CUSUM testing approach is to consider the simplest possible situation, where, under the null hypothesis, it is assumed that the process is generated as $y_t= \mu  + u_t$, with $\mu$ and $\sigma^2=Var(u_t)$ assumed to be known.
To test the hypothesis that the mean changes at $T^*$, we introduce the dummy variable $D_t^*$, which is unity for $t \geq T^*$ and zero elsewhere.
The uniformly most powerful test statistic is the $t$-statistic for the hypothesis $\delta=0$ in the regression $y_t-\mu = \delta D_t^*+ u_t$, which is given by $\sigma^{-1} (T-T^*+1)^{-1/2} \sum_{t=T^*}^T (y_t - \mu)$.
If $\mu$ is unknown, we may replace it by the full sample mean $\overline y$, resulting in the backward cumulative sum of the OLS residuals from period $T$ through $T^*$.
If $T^*$ is unknown, the test statistic is computed for all possible values of $T^*$, whereas the starting point $T$ of the backward cumulative sum remains constant.
Since the sum of the OLS residuals is zero, it follows that the test is equivalent to a test based on the forward cumulative sum of the OLS residuals.
In contrast, if we replace $\mu$ with the recursive mean $\overline \mu_{t-1}=(t-1)^{-1}\sum_{i=1}^{t-1} y_t$, we obtain a test statistic based on the backward cumulative sum of the recursive residuals (henceforth: backward CUSUM).
In this case, however, the test is different from a test based on the forward cumulative sum of the recursive residuals (henceforth: forward CUSUM).
This is due to the fact that the sum of the recursive residuals is an unrestricted random variable.
Accordingly, the two versions of the test may have quite different properties.
In particular, it turns out that the backward CUSUM is much more powerful than the standard forward CUSUM at the end of the sample.
Accordingly, this version of the CUSUM test procedure is better suited for the purpose of real-time monitoring, where it is crucial to be powerful at the end of the sample.

An additional problem of the conventional CUSUM test is that it has no power against alternatives that do not affect the unconditional mean of $y_t$ (see \citealp{kramer1988}).
For both retrospective testing and monitoring, we propose a multivariate sequential statistic in the fashion of the score-based cumulative sum statistic of \cite{hansen1992} and the tests by \cite{jiang2019}.
The maximum vector entry of the multivariate statistic yields a detector and a sequential test that has power against a much larger class of structural breaks than when using conventional CUSUM detectors.

We also suggest a new estimator for the break date based on backwardly cumulated recursive residuals.
This estimator outperforms the conventional estimator constructed by the sum of squared residuals whenever the break occurs close to the end of the sample, which is the relevant scenario for on-line monitoring.

This paper is organized as follows.
In Section \ref{sec:cusum}, the limiting distribution of the multivariate CUSUM process is derived under both the null hypothesis and local alternatives.
Section \ref{sec:backCUSUM} introduces the backward CUSUM and the stacked backward CUSUM tests for both retrospective testing and monitoring.
While the backward CUSUM is only defined for $t \leq T$ and can thus be implemented only for retrospective testing, the stacked backward CUSUM cumulates recursive residuals backwardly in a triangular scheme and is therefore suitable for real-time monitoring.
The local powers of the tests are compared in Section \ref{sec:localpower}.
In the retrospective setting, the powers of the backward CUSUM and the stacked backward CUSUM tests are substantially higher than that of the conventional forward CUSUM test if a single break occurs after one third of the sample size.
In the case of monitoring, the detection delay of the stacked backward CUSUM under local alternatives is shown to be much lower than that of the monitoring CUSUM detector by \cite{chu1996}.
In Section \ref{sec:infinitehorizon} we present a strong invariance principle for the multivariate CUSUM process and propose an infinite horizon monitoring procedure.
Section \ref{sec:breakpoint} considers the estimation of the break date based on backwardly cumulated recursive residuals.
We present an estimator, which is more accurate than the conventional maximum likelihood estimator if the break is located at the end of the sample.
Section \ref{sec:simulations} presents Monte Carlo simulation results, in Section \ref{sec:covid} we provide a real-data example on monitoring SARS-CoV-2 infections during the COVID-19 pandemic, and Section \ref{sec:Conclusion} concludes.

Throughout the paper, we use the following notation: $\| a \| = \max_{i=1,\ldots, k} |a_i|$ denotes the maximum norm
and $\| A \|_M = \max_{i=1, \ldots, k} \sum_{j=1}^l |A_{i,j}|$ denotes the maximum absolute row sum norm, where  $a \in \mathbb R^k$, and $A \in \mathbb R^{k \times l}$.
We use $\pplim$ to denote convergence in probability as $T \to \infty$, $\Dlim$ for convergence in distribution, and $\Deq$ to indicate that two random variables have the same distribution.
The space of right continuous functions with left limits (c\`{a}dl\`{a}g) on $[0,m]$, where $0 < m < \infty$, is denoted as $D([0,m])$, and its $k$-fold product space is $D([0,m])^k = D([0,m]) \times \ldots \times D([0,m])$.
The space is equipped with the Skorokhod metric (see \citealp{billingsley1999}), and the symbol ``$\Rightarrow$'' denotes weak convergence with respect to this metric.

\section{The multivariate CUSUM process}	\label{sec:cusum}

We consider the multiple linear regression model
\begin{equation}
	y_t = x_t' \beta_t + u_t, \quad t \in \mathbb{N},	\label{eq:model}
\end{equation}
where $y_t$ is the dependent variable, and $x_t = (1, x_{t2}, \dots, x_{tk})^\T$ is the vector of regressor variables including a constant.
The $k \times 1$ vector of regression coefficients $\beta_t$ depends on the time index $t$, and $u_t$ is an error term.
The time point $T$ divides the time horizon into the retrospective time period $t \leq T$ and the monitoring period $t > T$.
We impose the following assumptions on the regressors and the error term.
\begin{assumption} \
\begin{itemize}
	\item[(a)] The errors satisfy $E[u_t] = 0$, $E[u_t^2]=\sigma^2 > 0$, and $E[|u_t|^8] < \infty$ for all $t$.
	\item[(b)] The regressors satisfy $E[\| x_t \|^8] < \infty$ for all $t$, and the sample covariance matrices $\widehat C_t = t^{-1} \sum_{j=1}^t x_j x_j'$ are uniformly positive definite for all $t > k$ with $\plim_{T \to \infty} \widehat C_T = C$.
	\item[(c)] There exists a positive definite $\Omega$ such that 
	$\plim_{T \to \infty} T^{-1}( \sum_{t=1}^T x_t u_t) ( \sum_{t=1}^T x_t u_t )' = \Omega$.
\end{itemize}
\label{ass:model1} \label{assumption}
\end{assumption}

\noindent
Model \eqref{eq:model} allows for conditionally heteroskedastic errors and local non-stationary regressors, provided that a global long-run covariance matrix $\Omega$ exists.
The regressors can contain lagged dependent variables such as in autoregressive distributed lag models.
We focus on models with a correctly specified dynamic structure and uncorrelated errors.

\begin{assumption}
The error process $u_t$ is a martingale difference sequence with respect to $\mathcal F_t$, the $\sigma$-algebra generated by $\{( x_{i+1}', u_{i})', \ i \leq t\}$.
\label{ass:md}
\end{assumption}

Following \cite{brown1975} this assumption rules out autocorrelated error processes. In practice this may require a dynamic specification with a suitable lag distribution of the variables. 
In Remark \ref{rem:serialcorrelation} we show that autocorrelated errors can be accommodated by replacing the ordinary covariance matrix by a (consistent estimate of the) long-run covariance matrix. Since the estimation of long-run covariances can lead to finite sample size distortions  (see e.g.\ \citealt{casini2021b}), Assumption \ref{ass:md} is a common and convenient assumption in practice.
The expression of the global covariance matrix simplifies to $\Omega = \sigma^2 C$ under Assumption \ref{ass:md}.

Recursive residuals for linear regression models were introduced by \cite{brown1975} as standardized one-step ahead forecast errors, and are defined as
\begin{equation*}
	w_t = \frac{y_t -  x_t^\T  \widehat \beta_{t-1}}{ ( 1 + x_t^\T ( \sum_{i=1}^{t-1}  x_i  x_i^\T )^{-1}  x_t)^{1/2} }, \quad t \geq k + 1,
\end{equation*}
and $w_t = 0$ for $t = 1, \ldots, k$,
where $\widehat{ \beta}_{t-1} = (\sum_{i=1}^{t-1}  x_i  x_i^\T )^{-1} \sum_{i=1}^{t-1} x_i y_i$. Using recursive residuals instead of ordinary OLS residuals as in \cite{ploberger1992} has a number of advantages. 
First, the recursive residuals behave exactly as under the null hypothesis until the parameters change, whereas a structural break affects all OLS residuals in a different manner.
Second, under Assumptions \ref{ass:model1} and \ref{ass:md}, the recursive residuals form a martingale difference sequence regardless of the estimation error in the recursive residuals. 
By contrast, the OLS residuals are (slightly) autocorrelated, which only disappears if the sample size gets large.

The conventional CUSUM detector is given by $S_{t,T} = \widehat \sigma_T^{-1} T^{-1/2} \sum_{i=1}^t w_i$, where $\widehat \sigma_T^2$ denotes the sample variance of $\{ w_{k+1}, \ldots, w_T\}$.
Under the null hypothesis $H_0: \beta_t = \beta_0$ for all $t$, the univariate CUSUM process obeys the functional central limit theorem $S_{\rT,T} \Rightarrow W(r)$, where  $W(r)$ is a standard Brownian motion (see \citealp{sen1982}).
The univariate CUSUM {\it$R$-test} of \cite{brown1975} rejects the null hypothesis if the path of $|S_{t,T}|$ exceeds the linear critical boundary function $b_t = \lambda_\alpha  d_{\text{lin}}(t/T)$ for at least one time index $t=1, \ldots, T$, where
\begin{equation}
	d_{\text{lin}}(r) = 1+2r. \label{linearboundary}
\end{equation}
The critical value $\lambda_\alpha$ is the $(1-\alpha)$ quantile of $\sup_{0\leq r \leq 1} |W(r)|/d_{\text{lin}}(r)$ and determines the significance level $\alpha$, which accounts for the multiplicity issue of the sequential test procedure.
In the monitoring context, \cite{chu1996} considered the radical type boundary function
$b_{\text{rad}}(r) = r^{1/2}  (\log(r) - \log(\alpha^2))^{1/2}$,
which is derived from the boundary crossing probability for a Brownian motion (see \citealp{robbins1970}).
The conventional univariate CUSUM {\it$M$-test} rejects the null hypothesis if the detector statistic $|S_{t,T} - S_{T,T}|$ exceeds $b_t = b_{\text{rad}}(t/T)$ for some $t > T$.

A weakness of univariate CUSUM tests is that they focus on breaks in the intercept. \cite{ploberger1990} studied local alternatives of the form $\beta_t =  \beta_0 + T^{-1/2} g(t/T)$, where $g: \mathbb R \to \mathbb R^k$ is piecewise constant and bounded.
The authors showed that $S_{\rT ,T} \Rightarrow W(r) + \pi^\T h(r)$, where $\pi = e_1' C$, $e_1 = (1,0, \ldots, 0)'$, and
\begin{equation}
	h(r) = \frac{1}{\sigma} \int_0^r g(z) \dd z - \frac{1}{\sigma} \int_0^r \int_0^z \frac{1}{z} g(v) \dd v \dd z.	\label{f_h(r)}
\end{equation}
Consequently, univariate CUSUM tests have no power if $g(r)$ is orthogonal to $\pi$.
To sidestep this difficulty, we follow \cite{jiang2019} and consider the multivariate statistic
\begin{equation*}
	Q_T(r) =\frac{1}{\widehat \sigma_T \sqrt{T}} \widehat C_T^{-1/2} \sum_{t=1}^{\rT} x_t w_t.
\end{equation*}
Under Assumption 1, the multivariate series $x_t u_t$ obeys a multivariate functional central limit theorem (see \citealt{phillips1986}), which also applies to the multivariate CUSUM process of recursive residuals.

\begin{theorem} \label{thm:fclt} \label{thm_fclt_H1}
Let Assumptions \ref{ass:model1} and \ref{ass:md} hold true. If $\beta_t = \beta_0$ for all $t$, then
\begin{equation}
	Q_T(r) \Rightarrow W^{(k)}(r),	\quad r \in [0,m],		\label{eq:fcltH0}
\end{equation}
for any $m < \infty$, as $T \to \infty$, where $W^{(k)}(r)$ is a $k$-dimensional standard Brownian motion.
If $\beta_t = \beta_0 + T^{-1/2} g(t/T)$, where $g(r)$ is piecewise constant and bounded, then
\begin{equation*}
	Q_T(r) \Rightarrow W^{(k)}(r) + C^{1/2} h(r),	\quad r \in [0,m].
\end{equation*}
\end{theorem}

This is an extension of the results in \cite{jiang2019}, who considered slightly stronger assumptions and no local alternatives.
Note that the function $g(r)$ is constant if and only if $\beta_t$ is constant.
If $\beta_t = \beta_0$ for all $t$, we have $h(r) = 0$.
By contrast, under a local alternative with a non-constant break function $g(r)$, it follows that $h(r)$ is non-zero, and, consequently, $C^{1/2} h(r)$ is non-zero, since $C^{1/2}$ is positive definite.
Hence, sequential tests that are based on $Q_T(r)$ have power against a larger class of alternatives than the tests of \cite{brown1975} and \cite{chu1996}.

Therefore, we consider {\it$R$-} and {\it$M$-tests} that are based on the multivariate detector $Q_{t,T} = Q_T(t/T)$.
Note that $Q_{t,T} = S_{t,T}$ if there is only an intercept in the model.
The multivariate forward CUSUM {\it$R$-test} is defined by the following rule: the null hypothesis is rejected if the path of $\| Q_{t,T} \|$ exceeds the boundary function $b_t = \lambda_\alpha d(t/T)$ for at least one index $t=1, \ldots, T$.
Equivalently, we can express this sequential test as a one-shot test, where $H_0$ is rejected if the maximum statistic $\mathcal Q_T = \max_{t=1, \ldots, T} \|Q_{t,T} \|/d(t/T)$ exceeds the critical value $\lambda_\alpha$, which is the $(1-\alpha)$ quantile of its limiting null distribution.

\begin{assumption} \label{ass:boundary}
The boundary function is of the form $b(r) = \lambda_\alpha d(r)$, where $d(r)$ is continuous. There exists $\epsilon > 0$ such that $d(r) > \epsilon$ for all $r \geq 0$.
\end{assumption}

\noindent
By Theorem \ref{thm:fclt} and the continuous mapping theorem it follows that
\begin{align*}
	\mathcal Q_T \Dlim \sup_{r \in (0,1)} \frac{\| W^{(k)}(r) \|}{ d(r)}
\end{align*}
under the null hypothesis.
The multivariate forward CUSUM {\it $M$-test} with fixed endpoint $M=\mT$ rejects $H_0$ if the path of $\| Q_{t,T} - Q_{T,T} \|$ exceeds the boundary function $b_t= \lambda_\alpha d((t-T)/T)$ for at least one index $t=T+1, \ldots, \mT$, where $1 < m < \infty$.
The corresponding maximum statistic is $\mathcal Q_{T,m} = \max_{t=T+1, \ldots, \mT} \|Q_{t,T} - Q_{T,T} \|/d((t-T)/T)$, where, under $H_0$,
\begin{align*}
	\mathcal Q_{T,m} \Dlim \sup_{ r \in (0, m-1)} \frac{\| W^{(k)}(r) \|}{d(r)}.
\end{align*}

\begin{remark} \label{rem:serialcorrelation}
If the dynamics of the model are not specified correctly, the errors may be autocorrelated and Assumption \ref{ass:md} does not apply.
In this case, the limiting distribution differs from that in \eqref{eq:fcltH0} and depends on the global long-run covariance matrix $\Omega$.
Under additional strong mixing assumptions, the process $x_t u_t$ obeys the multivariate functional central limit theorem $T^{-1/2} \sum_{t=1}^{\rT} x_t u_t \Rightarrow \Omega^{1/2} W^{(k)}(r)$ (see \citealt{wooldridge1988}).
To obtain the same limiting distribution as in Theorem \ref{thm:fclt}, we may consider the modified multivariate CUSUM detector
\begin{equation*}
	\widetilde Q_T(r) =\frac{1}{\sqrt{T}} \widehat \Omega_T^{-1/2} \sum_{t=1}^{\rT} x_t w_t, \quad r \in [0,m],
\end{equation*}
where $\widehat \Omega_T$ is some consistent estimator for $\Omega$.
Suitable choices are the long-run covariance estimators of \cite{newey1987} and \cite{andrews1991}.
An alternative is the double kernel HAC estimator by \cite{casini2021b}, which performs well in the presence of locally stationary regressors.
In the Appendix we show that, if Assumption \ref{ass:model1} holds and if there exists $\kappa \geq 8$ such that $\sup_t E[\| x_t \|^\kappa] < \infty$, $\sup_t E[|u_t|^\kappa] < \infty$, and $(x_t, u_t)'$ is strong mixing of size $-\kappa/(\kappa-6)$, then, under the null hypothesis, $\widetilde Q_T(r) \Rightarrow W^{(k)}(r)$, as $T \to \infty$.
Therefore, all {\it $R$-} and {\it $M$-tests} can also be constructed based on the modified detector $\widetilde Q_{t,T} = \widetilde Q_T(t/T)$.
A second approach to deal with possible autocorrelation was proposed by \cite{robbins2011}. Their two-step adjustment approach first employs an ARMA model in order to obtain the pre-whitened residuals which in turn replaces the original residuals in the detector. In our case the ARIMA pre-whitening may be performed in a recursive fashion. As shown by \cite{robbins2011} the resulting detector possesses similar asymptotic properties as the original detector (apart from a scaling factor that depends on the long-run variance).
\end{remark}

\begin{remark} \label{rem:partial}
In practice, partial or one-sided tests can be beneficial in terms of a more powerful test if one is interested in breaks in certain coefficients or directions.
For testing the partial hypothesis $H_0: H' \beta_t = H' \beta_0$, where $H$ is a $k \times l$ matrix with full column rank, we consider the partial CUSUM process $Q_T^*(t/T) = Q_{t,T}^* = \widehat \sigma_T^{-1} T^{-1/2} (H' C_T H)^{-1/2} H' \sum_{j=1}^t x_j w_j$.
All {\it $R$-} and {\it $M$-tests} can be defined with respect to $Q_{t,T}^*$, where $Q_T^*(r) \Rightarrow W^{(l)}(r)$, under $H_0$ and the conditions of Theorem \ref{thm:fclt}.
The {\it $R$-test} by \cite{brown1975} and the {\it $M$-test} by \cite{chu1996} are partial structural break tests for which the matrix $H$ coincides with the first unit vector.
In case of one-sided tests, e.g.\ $H_1: H' \beta_t > H' \beta_0$, the maximum norm can be replaced by the simple maximum, so that $H_0$ is rejected if $p(H' Q_{t,T})$ exceeds the respective boundary function, where $p(x) = \max_{i=1, \ldots, l} x_i$, $x \in \mathbb R^l$.
\end{remark} 

\section{Backward CUSUM R- and M-tests}	\label{sec:backCUSUM}

\subsection{Backward CUSUM R-test}

\begin{table}[t]
\caption{Asymptotic critical values for $\mathcal{Q}_T$ and $\mathcal{BQ}_T$}
\centering
\begin{footnotesize}
\begin{tabular}{l|llllllllll}
 & $k=1$ & $k=2$ & $k=3$ & $k=4$ & $k=5$ & $k=6$ & $k=7$ & $k=8$ & $k=9$ & $k=10$\\ \hline 
$\alpha=10\%$ & 0.848 & 0.944 & 0.996 & 1.031 & 1.058 & 1.080 & 1.097 & 1.112 & 1.125 & 1.138 \\ 
$\alpha=5\%$ & 0.947 & 1.034 & 1.082 & 1.115 & 1.141 & 1.161 & 1.177 & 1.190 & 1.203 & 1.214 \\ 
$\alpha=1\%$ & 1.144 & 1.219 & 1.258 & 1.283 & 1.303 & 1.324 & 1.343 & 1.357 & 1.368 & 1.381 \\ \hline
 \hline
\end{tabular}
\end{footnotesize}
\hspace*{0.1ex}
\parbox{14.5cm}{ \vspace{0.5ex} \spacingset{1} \scriptsize
Note: Simulated $(1-\alpha)$ quantiles of $\sup_{r \in (0,1)} \| W^{(k)}(r)\|/(1+2r)$ based on 100,000 Monte Carlo replications are reported. 
The Wiener process is approximated on a grid of 50,000 equidistant points.
}
\label{tab_crit1}
\end{table}

\begin{figure}[t]
\centering
\caption{Illustrative example for the backward CUSUM with a break in the mean}
\includegraphics[scale=0.37]{./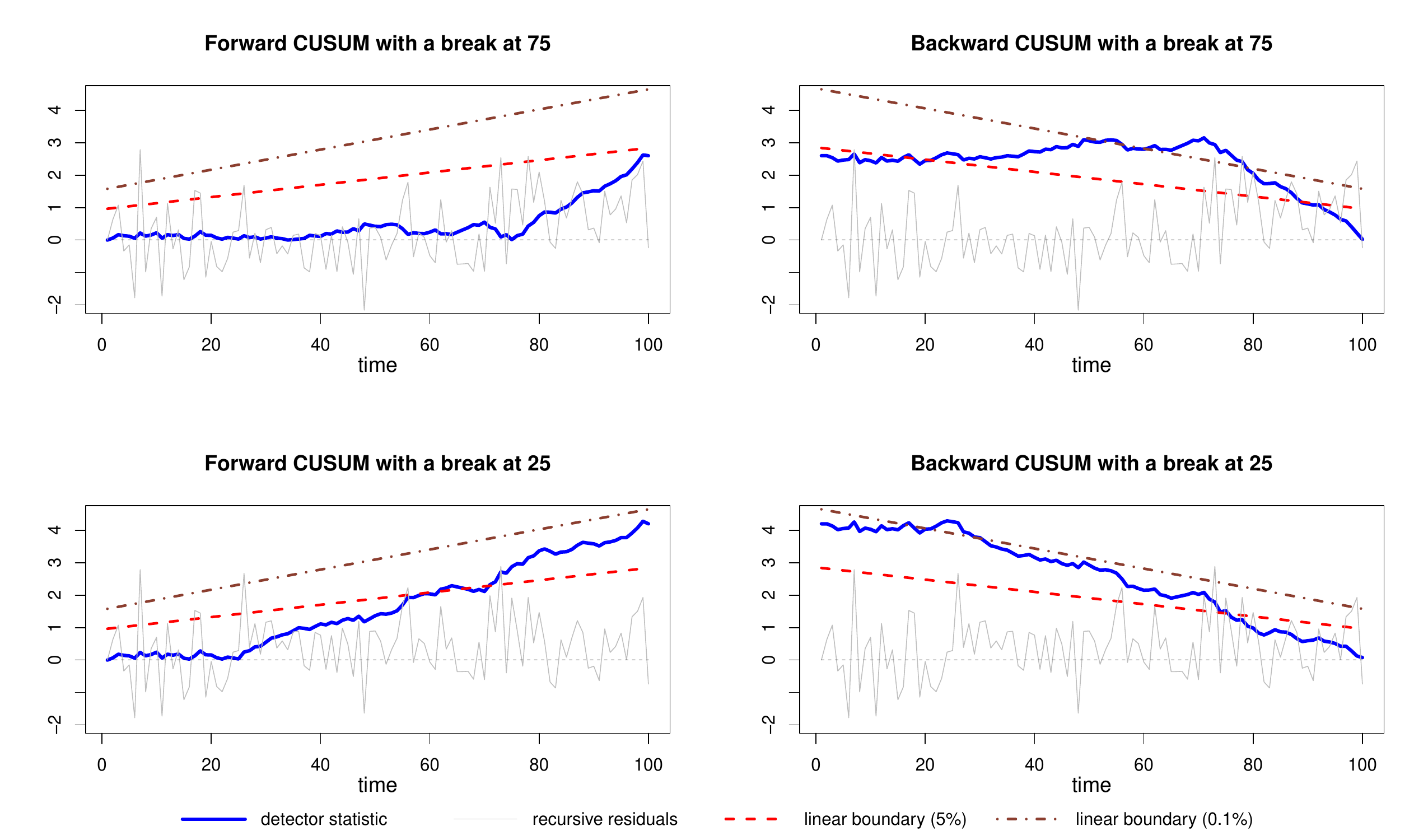}
\parbox{15cm}{ \vspace*{0.5ex} \spacingset{1} \scriptsize
Note: The process $y_t = \mu_t + u_t$, $t=1, \ldots, T$, is simulated for $T=100$ with $\mu_t = 0$ for $t < \tau^* T$, $\mu_t = 1$ for $t \geq \tau^* T$, and $u_t \sim NID(0,1)$, where $\tau^* = 0.75$ in the upper panels and $\tau^* = 0.25$ in the lower panels.
The bold solid line paths are the trajectories of $\|Q_{t,T}\|$ and $\|BQ_{t,T}\|$, where the detectors are univariate such that the norm is just the absolute value.
In the background, the recursive residuals are plotted.
The dashed and dash-dotted lines correspond to the linear boundary
$d_\text{lin}(r)$ with significance levels $\alpha = 5\%$ and $\alpha = 0.1\%$, respectively.
}
\label{fig_motivation}
\end{figure}

An alternative approach is to cumulate the recursive residuals in reversed order.
Suppose there is a single break in $\beta_t$ at time $t=T^*$.
Then, $\{ w_t, \ t < T^* \}$ are the residuals from the pre-break period, and $\{ w_t, \ t \geq T^* \}$ are those from the post-break period.
As the pre-break recursive residuals are not affected by a violation of the null hypothesis, they do not provide useful information about a subsequent break. Accordingly, the partial sum process $T^{-1/2} \sum_{j=1}^t w_j$ behaves like a pure random walk for $t < T^*$ and cumulating those residuals brings nothing but noise to the detector statistic. In contrast, the post-break residuals have nonzero mean and reveal relevant information about a possible break. In order to focus on the post-break residuals, we therefore consider backwardly cumulated partial sums of the form $T^{-1/2} \sum_{j=0}^{t-1} w_{T-j}$. We define the retrospective backward CUSUM detector as
\begin{equation*}
	BQ_{t,T} = Q_T(1) - Q_T\big(\tfrac{t-1}{T}\big) = \frac{1}{\widehat \sigma_T \sqrt{T}} C_T^{-1/2} \sum_{j=t}^T x_j w_j	\quad (t=1, \ldots, T).
\end{equation*}
The null hypothesis is rejected if $\|BQ_{t,T}\|$ exceeds the boundary
$b_t = \lambda_\alpha d((T-t-1)/T)$ for at least one time index $t$. 
The maximum statistic is given by
\begin{align*}
	\mathcal{BQ}_T = \max_{t= 1, \ldots, T} \frac{\|BQ_{t,T}\|}{d(\tfrac{T-t+1}{T})},
\end{align*}
and, under the local alternatives defined in Theorem \ref{thm:fclt}, the continuous mapping theorem implies
\begin{align}
\mathcal{BQ}_T \Dlim &\sup_{r \in (0,1)} \frac{\|W^{(k)}(1) + C^{1/2} h(1) - W^{(k)}(r) - C^{1/2} h(r)\|}{d(1-r)} \nonumber \\
&\Deq \sup_{r \in (0,1)} \frac{\| W^{(k)}(r) + C^{1/2} (h(1) - h(1-r)) \|}{d(r)}. \label{eq:BQdistribution}
\end{align}
Hence, the limiting distribution of $\mathcal{BQ}_T$ under $H_0$ coincides with that of $\mathcal{Q}_T$. 
Simulated asymptotic critical values under the linear boundary \eqref{linearboundary} are presented in Table \ref{tab_crit1}.
Under local alternatives, the limiting distributions of $\mathcal{BQ}_T$ and $\mathcal{Q}_T$ differ. 
A simple illustrative example of the detector paths together with the linear boundary \eqref{linearboundary} of \cite{brown1975} are depicted in Figure \ref{fig_motivation}, in which two processes with $k=1$ and a single break in the mean at $3/4$ and $1/4$ of the sample are simulated.

Unlike the forward CUSUM detector, the backward CUSUM detector is not measurable with respect to the filtration of available information at time $t$ and is therefore not suitable for a monitoring procedure.
The path of $\|BQ_{t,T}\|$ cannot be monitored in real-time, as it is only defined for $t \leq T$ with fixed endpoint $T$.
To obtain a feasible {\it $M$-test} in practice, we resort to a triangular backward inspection scheme of recursive residuals, which is discussed below.

\subsection{Stacked backward CUSUM R-test}

Let $\mathcal{BQ}_T(t) = \max_{s=1, \ldots, t} \|Q_T(t/T) -  Q_T((s-1)/T)\|/d((t-s+1)/T)$ be the backward CUSUM maximum statistic with endpoint $t$.
The idea of the stacked backward CUSUM scheme is to compute this statistic sequentially for each $t$, yielding
$\mathcal{BQ}_T(1), \ldots, \mathcal{BQ}_T(T)$.
The corresponding maximum statistic $\mathcal{SBQ}_T$ is the maximum among this sequence of backward CUSUM statistics.
An important feature is that this sequence is measurable with respect to the filtration of information at time $t$, so that $\mathcal{BQ}_T(t)$ is itself a sequential statistic.
Stacking all backward CUSUM statistics on one another leads to a triangular array structure given by
	\begin{equation}
		SBQ_{s,t,T} =  Q_T\big(\tfrac{t}{T}\big) -  Q_T\big(\tfrac{s-1}{T}\big) = \frac{1}{\widehat \sigma_T \sqrt{T}} C_T^{-1/2}\sum_{j=s}^t x_j w_j \quad  (t \in \mathbb N, \ s=1, \ldots, t),	\label{eq:SBQdetector}
	\end{equation}
which is denoted as the stacked backward CUSUM detector.
We reject $H_0$ if $\|SBQ_{s,t,T}\|$ exceeds the triangular boundary $b_{s,t} = b(\frac{t}{T}, \frac{s-1}{T})$ for some $t=1, \ldots, T$ and $s = 1, \ldots, t$.
\begin{assumption} \label{ass:boundarystacked}
The triangular boundary function is of the form $b(r,s) = \lambda_\alpha d(r,s)$, where $d(r,s)$ is continuous. There exists $\epsilon > 0$ such that $d(r,s) > \epsilon$ for all $0 \leq s \leq r$.
\end{assumption}

\noindent
The stacked backward CUSUM {\it $R$-test} can be equivalently expressed in terms of a maximum statistic. $H_0$ is rejected if the double maximum statistic
\begin{equation*}
	\mathcal{SBQ}_T = \max_{t=1, \ldots, T} \mathcal{BQ}_T(t) = \max_{t=1, \ldots, T} \max_{s=1, \ldots, t} \frac{\| SBQ_{s,t,T} \|}{d( \tfrac{t}{T}, \tfrac{s-1}{T})}
\end{equation*}
exceeds $\lambda_\alpha$.
Under the local alternatives defined in Theorem \ref{thm:fclt}, it follows that
\begin{align*}
	\mathcal{SBQ}_T  \Dlim  \sup_{r \in (0,1)} \sup_{s \in (0,r)} \frac{\|W^{(k)}(r) - W^{(k)}(s) + C^{1/2} [h(r) - h(s)] \|}{d(r,s)}.
\end{align*}

\begin{table}[t]
\caption{Asymptotic critical values for $\mathcal{SBQ}_{T,m}$ (fixed $m$)}
\centering
\begin{footnotesize}
\begin{tabular}{l|cccccccccccc}
 & \multicolumn{3}{c}{$k=1$} & \multicolumn{3}{c}{$k=2$} & \multicolumn{3}{c}{$k=3$} & \multicolumn{3}{c}{$k=4$} \\ \hline
\multicolumn{1}{c|}{$\alpha$} & $10\%$ & $5\%$ & $1\%$ & $10\%$ & $5\%$ & $1\%$ & $10\%$ & $5\%$ & $1\%$ & $10\%$ & $5\%$ & $1\%$ \\
$m=1.2$ & 0.780 & 0.859 & 1.023 & 0.857 & 0.932 & 1.082 & 0.900 & 0.973 & 1.121 & 0.930 & 1.002 & 1.147 \\ 
$m=1.4$ & 0.944 & 1.030 & 1.208 & 1.026 & 1.107 & 1.270 & 1.073 & 1.153 & 1.316 & 1.107 & 1.183 & 1.345 \\ 
$m=1.6$ & 1.024 & 1.114 & 1.290 & 1.109 & 1.189 & 1.356 & 1.156 & 1.235 & 1.398 & 1.190 & 1.266 & 1.428 \\ 
$m=1.8$ & 1.077 & 1.166 & 1.341 & 1.161 & 1.241 & 1.406 & 1.207 & 1.285 & 1.446 & 1.241 & 1.318 & 1.476 \\ 
$m=2$ & 1.116 & 1.202 & 1.374 & 1.195 & 1.274 & 1.438 & 1.243 & 1.319 & 1.479 & 1.275 & 1.351 & 1.506 \\ 
$m=4$ & 1.268 & 1.346 & 1.510 & 1.342 & 1.414 & 1.567 & 1.386 & 1.455 & 1.600 & 1.415 & 1.483 & 1.625 \\ 
$m=10$ & 1.392 & 1.462 & 1.610 & 1.460 & 1.527 & 1.665 & 1.499 & 1.564 & 1.695 & 1.526 & 1.589 & 1.722 \\ 
 & \multicolumn{3}{c}{$k=5$} & \multicolumn{3}{c}{$k=6$} & \multicolumn{3}{c}{$k=7$} & \multicolumn{3}{c}{$k=8$} \\ \hline
\multicolumn{1}{c|}{$\alpha$} & $10\%$ & $5\%$ & $1\%$ & $10\%$ & $5\%$ & $1\%$ & $10\%$ & $5\%$ & $1\%$ & $10\%$ & $5\%$ & $1\%$ \\
$m=1.2$ & 0.953 & 1.021 & 1.167 & 0.971 & 1.038 & 1.182 & 0.986 & 1.052 & 1.194 & 0.999 & 1.065 & 1.205 \\ 
$m=1.4$ & 1.131 & 1.206 & 1.363 & 1.151 & 1.225 & 1.378 & 1.167 & 1.240 & 1.390 & 1.180 & 1.253 & 1.402 \\ 
$m=1.6$ & 1.214 & 1.290 & 1.446 & 1.235 & 1.310 & 1.461 & 1.251 & 1.324 & 1.473 & 1.264 & 1.337 & 1.486 \\ 
$m=1.8$ & 1.265 & 1.340 & 1.493 & 1.285 & 1.360 & 1.512 & 1.301 & 1.374 & 1.525 & 1.314 & 1.387 & 1.538 \\ 
$m=2$ & 1.299 & 1.374 & 1.529 & 1.318 & 1.392 & 1.544 & 1.334 & 1.407 & 1.555 & 1.347 & 1.419 & 1.565 \\ 
$m=4$ & 1.436 & 1.504 & 1.644 & 1.453 & 1.522 & 1.659 & 1.469 & 1.536 & 1.673 & 1.482 & 1.548 & 1.683 \\ 
$m=10$ & 1.546 & 1.608 & 1.739 & 1.563 & 1.624 & 1.755 & 1.576 & 1.638 & 1.765 & 1.587 & 1.649 & 1.774 \\ \hline \hline
\end{tabular}
\end{footnotesize}
\hspace*{0.1ex}
\parbox{16.2cm}{ \vspace{0.5ex} \spacingset{1} \scriptsize
Note: Simulated $(1-\alpha)$ quantiles of $\sup_{r \in (0,m-1)} \sup_{s \in (0,r)} \| W^{(k)}(r) - W^{(k)}(s)\|/(1+2(r-s))$ based on 100,000 Monte Carlo replications are reported. 
The Wiener process is approximated on a grid of 50,000 equidistant points.
The critical values for $\mathcal{SBQ}_{T}$ ($R$-test) coincide with those for $\mathcal{SBQ}_{T,m}$ ($M$-test) with $m=2$.
}
\label{tab_crit2}
\end{table}

\subsection{Stacked backward CUSUM M-test}

Since the triangular detector \eqref{eq:SBQdetector} is measurable with respect to the information set at time $t$, it can be monitored on-line across all time points $t > T$.
The null hypothesis is rejected if $\|SBQ_{s,t,T}\|$ exceeds $b_{s,t} = b(\frac{t}{T}, \frac{s-1}{T})$ at least once for some $t \geq T+1$ and $s = T+1, \ldots, t$.
The {\it $M$-test} maximum statistic with a fixed horizon $m < \infty$ is given by 
\begin{equation*}
	\mathcal{SBQ}_{T,m} = \max_{t=T+1, \ldots, \mT} \max_{s=T+1, \ldots, t} \frac{\| SBQ_{s,t,T} \|}{d( \frac{t}{T}, \frac{s-1}{T} )},
\end{equation*}
where, analogously to \eqref{eq:BQdistribution},
\begin{align*}
	\mathcal{SBQ}_{T,m} \Dlim& \sup_{r \in (1,m)} \sup_{s \in (1,r)} \frac{\|W^{(k)}(r) - W^{(k)}(s) + C^{1/2}[h(r) - h(s)] \|}{d(r,s)} \\
	\Deq& \sup_{r \in (0,m-1)} \sup_{s \in (0,r)} \frac{\|W^{(k)}(r) - W^{(k)}(s) + C^{1/2}[h(r+1) - h(s+1)] \|}{d(r,s)}.
\end{align*}
Simulated critical values for the stacked backward CUSUM $R$- and $M$-tests under the linear triangular boundary
\begin{align}
	d_{\text{sbq}}(r,s) = 1+2(r-s)		\label{eq:linearboundarySBQ}
\end{align}
are presented in Table \ref{tab_crit2}.

\section{Local power} \label{sec:localpower}

\begin{figure}[t]
\caption{Asymptotic local power curves}
\centering
\includegraphics[scale=0.21]{./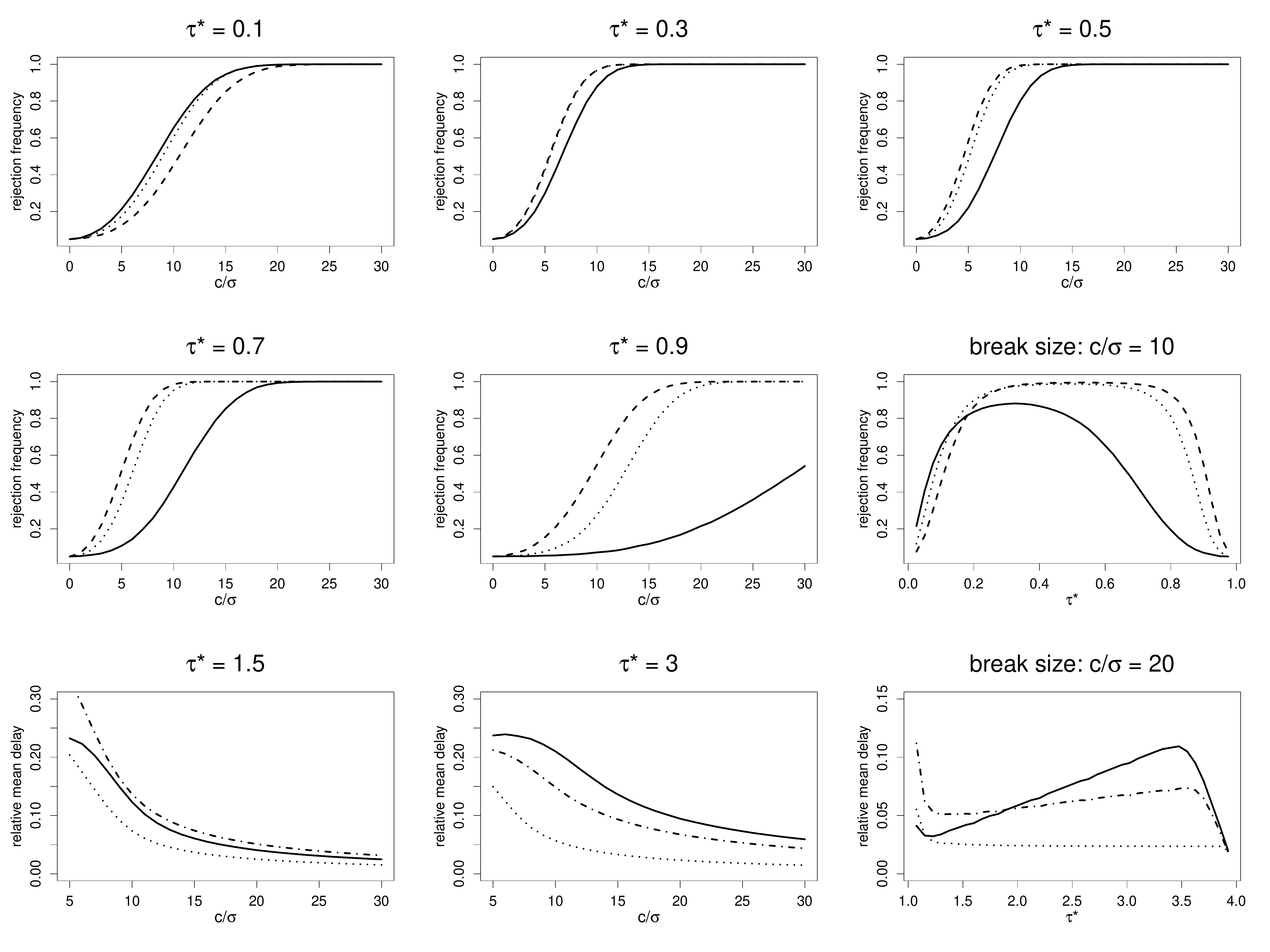}
\parbox{16cm}{ \vspace{0.5ex} \spacingset{1} \scriptsize
Note: The upper six panels show simulated asymptotic local power curves for the {\it $R$-tests} $\mathcal Q_T$ (solid), $\mathcal{BQ}_T$ (dashed) and $\mathcal{SBQ}_T$ (dotted) from equation \eqref{eq:Rtestlimiting}.
The bottom three panels show simulated asymptotic mean local delays for the {\it $M$-tests} $\mathcal{Q}_{T,m}$ (solid) and $\mathcal{SBQ}_{T,m}$ (dotted) from equation \eqref{eq:MtestinglimitQ}, and the test by \cite{chu1996} (dash-dotted), where $m=4$.
The Brownian motions in the limiting distributions are approximated on a grid of 1,000 equidistant points and the rates are obtained from 100,000 Monte Carlo repetitions using size-adjusted $5\%$ critical values.
}
\label{fig_powerplot_singlebreak}
\end{figure}

In order to illustrate the advantages of the backward CUSUM tests, we consider the simple local break model $\beta_t = \beta_0 + T^{-1/2} g(t/T)$ with $g(r) = c 1_{\{ r \geq \tau^* \}}$, where $c \in \mathbb{R}^k$, and $\tau^*$ denotes the break location.
From \eqref{f_h(r)} it follows that
\begin{align}
	h(r) 
	= c \sigma^{-1} \Big( \int_{\tau^*}^r \dd z - \int_0^r \int_{\tau^*}^z \frac{1}{z} \dd v \dd z  \Big) 
	= c \sigma^{-1} \tau^* (\ln(r) - \ln(\tau^*)) 1_{\{ r \geq \tau^* \}},		\label{eq:h(r)}
\end{align}
and, under the linear boundaries \eqref{linearboundary} and \eqref{eq:linearboundarySBQ}, the $R$-tests satisfy
\begin{align}
	&\mathcal Q_T \Dlim \sup_{r \in (0,1)} \frac{\|W(r) + h(r)\|}{1+2r}, \quad 
	\mathcal{BQ}_T \Dlim \sup_{r \in (0,1)} \frac{\|W(r) + h(1)-h(1-r)\|}{1+2r}, \nonumber \\
	&\mathcal{SBQ}_T \Dlim \sup_{r \in (0,1)} \sup_{s \in (0,r)} \frac{\|W(r)-W(s)+h(r)-h(s)\|}{1+2(r-s)}. \label{eq:Rtestlimiting}
\end{align}
Asymptotic local power curves from the limiting distributions in \eqref{eq:Rtestlimiting} for the case $k=1$ are presented in Figure \ref{fig_powerplot_singlebreak}.
The $(2,3)$-element of the panel of figures shows that for a fixed break size the backward CUSUM and the stacked backward CUSUM outperform the forward CUSUM if a single break $\tau^*$ is located after $15\%$ of the sample size.
If the break date $\tau^*$ tends to the end of the sample, the power gain of $\mathcal{BQ}_T$ and $\mathcal{SBQ}_T$ increases substantially.

For the $M$-test statistics with fixed endpoint $m=2$, the limiting distributions  of $\mathcal Q_{T,m}$ and $\mathcal{SBQ}_{T,m}$ for a break at $\tau^* \in (1,2)$ coincide with those for the $R$-tests presented in \eqref{eq:Rtestlimiting} for a break at $\tau^* \in (0,1)$. 
Hence, the power of $\mathcal{SBQ}_{T,m}$ is higher than that of $\mathcal Q_{T,m}$ if breaks are located after 15\% of the pre-monitoring sample.

Another important performance measure for {\it $M$-tests} is the delay between the actual break and the detection time point.
\cite{aue2004} and \cite{aue2009} derived the asymptotic distribution of the detection stopping time of CUSUM $M$-tests that are based on OLS residuals.
For $\mathcal Q_{T,m}$ and $\mathcal{SBQ}_{T,m}$, the detection stopping times are given by
\begin{align*}
	\mathcal D_{T,m,Q} &= \min \bigg\{ t \in \{T+1, \ldots, \mT \} \bigg| \frac{\|Q_{t,T} - Q_{T,T}\|}{  d(\frac{t-T}{T}) \lambda_{\alpha,m,Q}} \geq 1 \bigg\}, \\
	\mathcal D_{T,m,SBQ} &= \min \bigg\{ t \in \{T+1, \ldots, \mT \} \bigg| \max_{T < s \leq t} \frac{\|SBQ_{s,t,T}\|}{d(\frac{t}{T},\frac{s-1}{T}) \lambda_{\alpha,m,SBQ} } \geq 1 \bigg\},
\end{align*}
where $\lambda_{\alpha,m,Q}$ and $\lambda_{\alpha,m,SBQ}$ are the corresponding critical values. 
Under the same setting as in \eqref{eq:Rtestlimiting}, the relative detection stopping times satisfy
\begin{align}
	\frac{\mathcal D_{T,m,Q}}{T} &\Dlim \min  \bigg\{ r \in (0,m-1) \bigg| \frac{\|W(r) + h(r+1) - h(r)\|}{(1+2r) \lambda_{\alpha,m,Q} } \geq 1 \bigg\}, \label{eq:MtestinglimitQ} \\
	\frac{\mathcal D_{T,m,SBQ}}{T} &\Dlim \min \bigg\{ r \in (0,m-1) \bigg| \sup_{s \in (0,r)} \frac{\|W(r)-W(s)+h(r+1)-h(s+1)\|}{ (1+2(r-s)) \lambda_{\alpha,m,Q} } \geq 1 \bigg\}, \nonumber
\end{align}
as $T \to \infty$, where the limiting relative stopping times are denoted as $\tau_{D,Q}$ and $\tau_{D,SBQ}$, respectively.
The asymptotic mean local delays $E ( \tau_{D}  \mid  \tau^* \leq \tau_{D} \leq m ) - \tau^*$, where $\tau_{D} \in \{\tau_{D,Q}, \tau_{D,SBQ} \}$, are presented in the bottom panels of Figure \ref{fig_powerplot_singlebreak} for $m=4$ and different break locations.
The asymptotic mean local delay of $\mathcal{SBQ}_{T,m}$ is much lower than that of $\mathcal{Q}_{T,m}$.
Moreover, the asymptotic mean local delay of $\mathcal{SBQ}_{T,m}$ slowly decreases in $\tau^*$ and is much lower than that of $\mathcal{Q}_{T,m}$, except for early breaks.

\begin{figure}[t]
\caption{Size distributions of the retrospective and monitoring detectors}
\centering
    \mbox{\includegraphics[scale=0.27, page=1]{./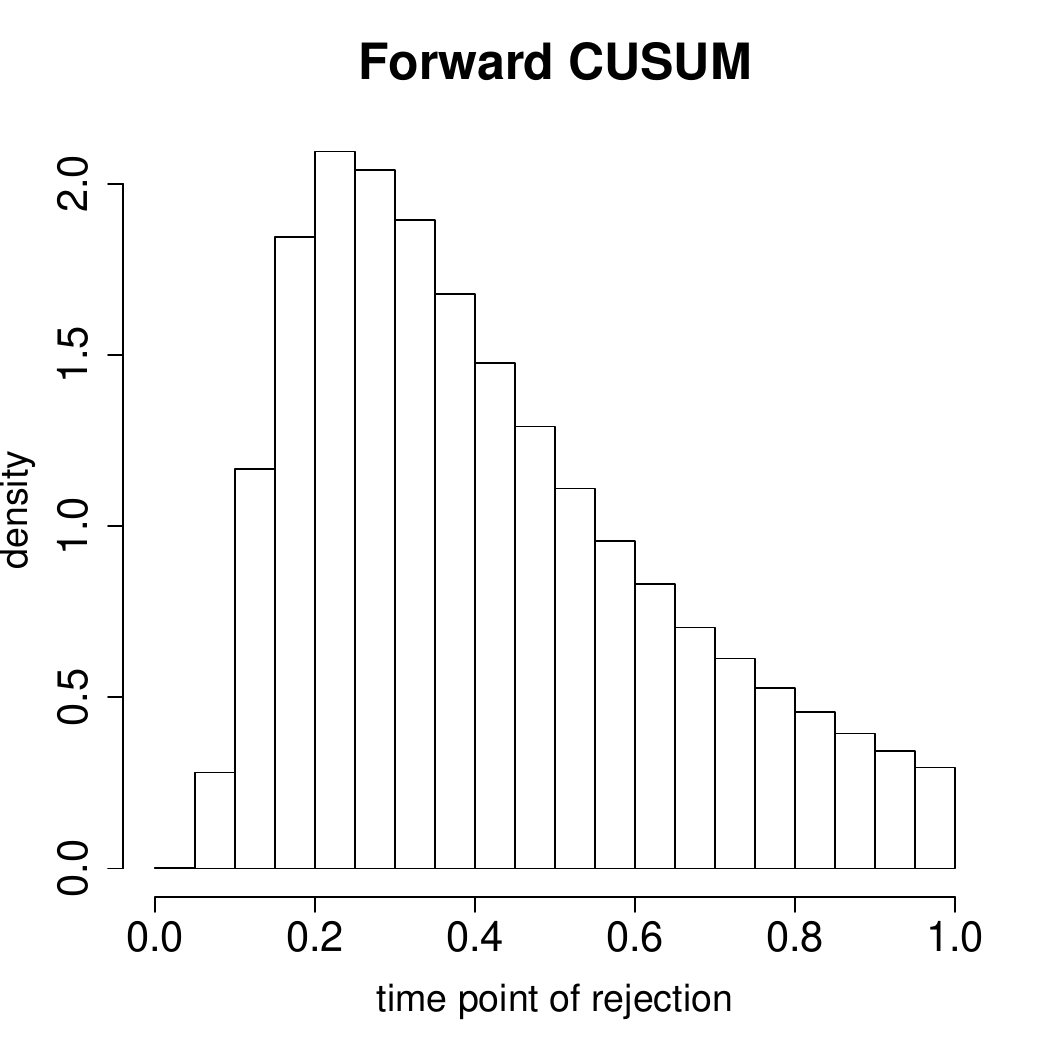}}
    \mbox{\includegraphics[scale=0.27, page=2]{./img/figure4a.pdf}}
    \mbox{\includegraphics[scale=0.27, page=3]{./img/figure4a.pdf}}
        \mbox{\includegraphics[scale=0.27, page=1]{./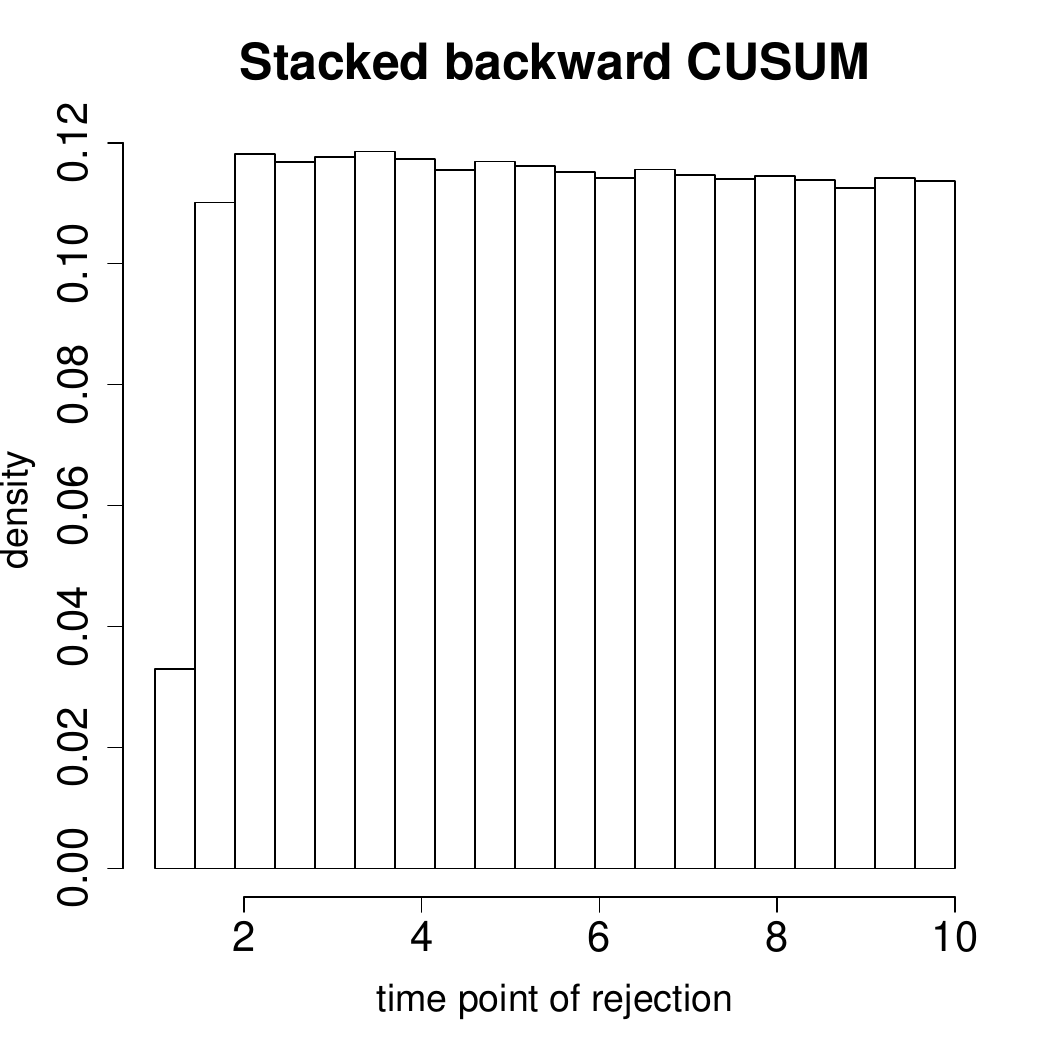}}
    \mbox{\includegraphics[scale=0.27, page=2]{./img/figure4b.pdf}}
    \mbox{\includegraphics[scale=0.27, page=3]{./img/figure4b.pdf}}
\parbox{14.6cm}{ \vspace{0.5ex} \spacingset{1} \scriptsize
Note: The frequencies of the location of the first boundary exceedance under the null hypothesis are shown for a significance level of $5 \%$ for the model with $k=1$.
The frequencies are based on random draws under the limiting null distribution of the maximum statistics.
The retrospective cases are considered for the upper three histograms and the fixed endpoint monitoring case with $m=10$ for the lower three.
The linear boundaries \eqref{linearboundary} and \eqref{eq:linearboundarySBQ} are considered in the first five plots and the radical boundary by \cite{chu1996} is used in the last plot.
}
\label{fig_SizeDist}
\end{figure}

\begin{remark} \label{rem:sizedist}
While, for one-shot tests, the critical value determines the type I error, sequential testing involves two degrees of freedom.
Besides the test size, which is controlled asymptotically by an appropriately chosen value for $\lambda_\alpha$, the shape of the boundary determines the distribution of potential relative crossing time points $r$.
As already noted by \cite{brown1975}, the forward CUSUM with the linear boundary \eqref{linearboundary} puts more weight on detecting breaks that occur early in the sample (c.f. Figure \ref{fig_powerplot_singlebreak}). 
In Figure \ref{fig_SizeDist} we present the distributions of the first boundary crossing under the null hypothesis, which is also referred to as the ``distribution of the size'' (see \citealp{anatolyev2018}).
The results indicate that the size is skewed for the forward and backward CUSUM tests and almost evenly distributed for stacked backward CUSUM tests, which is due to the weighting scheme of the linear triangular boundary function \eqref{eq:linearboundarySBQ}.
There is no consensus on which distribution should be preferred, as whether one wishes to put more weight on particular regions of time points of rejection depends on the particular application.
However, \cite{zeileis2005} and \cite{anatolyev2018} argue that if no further information is available, one might prefer a uniform distribution to a skewed one.
\end{remark}

\section{Infinite horizon monitoring} \label{sec:infinitehorizon}

\begin{table}[t]
\caption{Asymptotic critical values for infinite horizon $M$-tests}
\centering
\begin{footnotesize}
\begin{tabular}{l|lllll|lllll}
\multicolumn{1}{c}{} & \multicolumn{5}{c}{stacked backward CUSUM} & \multicolumn{5}{c}{forward CUSUM} \\
 & $k=1$ & $k=2$ & $k=3$ & $k=4$ & $k=5$ & $k=1$ & $k=2$ & $k=3$ & $k=4$ & $k=5$\\ \hline 
$\alpha=10\%$ & 0.911 & 0.974 & 1.010 & 1.035 & 1.054 & 0.864 & 0.956 & 1.006 & 1.040 & 1.066 \\ 
$\alpha=5\%$ & 0.976 & 1.036 & 1.071 & 1.094 & 1.113 & 0.958 & 1.044 & 1.090 & 1.121 & 1.146 \\ 
$\alpha=1\%$ & 1.113 & 1.169 & 1.199 & 1.219 & 1.236 & 1.148 & 1.222 & 1.261 & 1.289 & 1.308 \\ 
 \hline
 \hline
\end{tabular}
\end{footnotesize}
\hspace*{0.1ex}
\parbox{14.5cm}{ \vspace{0.5ex} \spacingset{1} \scriptsize
Note: Simulated $(1-\alpha)$ quantiles of the limiting distributions presented in Theorem \ref{thm:infinite} based on 100,000 Monte Carlo replications are reported. 
For $\mathcal{SBQ}_{T,\infty}$ we use boundary $d_\text{inf}(r,s)$, and for $\mathcal{Q}_{T,\infty}$ the boundary $d_\text{lin}(r)$ is implemented.
The Brownian bridge is approximated on a grid of 50,000 equidistant points.
}
\label{tab:crit3}
\end{table}

The functional central limit theorem given by Theorem \ref{thm_fclt_H1} is not suitable for analyzing the asymptotic behavior of an infinite horizon monitoring statistic, since the variance of $Q_T(r)$ is unbounded as $r \to \infty$, and $\sup_{r \geq 1} \| Q_T(r) - W^{(k)}(r) \|$ might not converge in general.
Instead, we need an almost sure invariance principle, which is specified by the following high level condition:

\begin{assumption}	\label{ass:strongprinciple}
There exists a $k$-dimensional Brownian motion $W^{(k)}(t)$ such that
\begin{align*}
	\sum_{j=1}^t x_j u_j = \Omega^{1/2} W^{(k)}(t) + o(t^{1/2}), \qquad \text{(a.s.)}.
\end{align*}
\end{assumption}

Almost sure invariance principles were first studied by \cite{strassen1967}, who verified Assumption \ref{ass:strongprinciple} under the additional assumption that $x_t u_t$ is a stationary and ergodic martingale difference sequence.
Optimal rates were first derived by \cite{komlos1975}.
 \cite{aue2004} and \cite{aue2009} present examples where Assumption \ref{ass:strongprinciple} is satisfied, which include martingale difference sequences and linear processes with GARCH-type innovations under mild regularity conditions.
For more general dependent processes under suitable regularity conditions, Assumption \ref{ass:strongprinciple} was shown by \cite{wu2007} and \cite{berkes2014} with respect to a physical dependence measure (see also \citealt{berkes2011}).

\begin{theorem} \label{thm:strongprinciple}
Let Assumptions \ref{ass:model1}, \ref{ass:md}, and \ref{ass:strongprinciple} hold true and let $\beta_t = \beta_0 $ for all $t \in \mathbb{N}$. There exists a $k$-dimensional standard Brownian motion $W^{(k)}(r)$, such that, as $T \to \infty$,
\begin{align*}
	\sup_{r > 1} r^{-1/2} \| Q_T(r) - W^{(k)}(r) \| = o_P(1).
\end{align*}
\end{theorem} 

This result is the key tool to establish the limiting distributions of infinite horizon monitoring statistics under $H_0$ and indicates the need of further restrictions on the boundary function.
In the Appendix, we also show that this result remains valid if we replace $Q_T(r)$ by the autocorrelation robust statistic $\widetilde Q_T(r)$ under additional mixing assumptions.
The infinite horizon forward CUSUM and stacked backward CUSUM maximum statistics are defined as
\begin{align*}
	\mathcal{Q}_{T,\infty} = \max_{t \geq T+1} \frac{\| Q_T(\frac{t}{T}) - Q_T(1) \|}{d( \tfrac{t-T}{T})}, \quad 
	\mathcal{SBQ}_{T,\infty} = \max_{t \geq T+1} \max_{s=T+1, \ldots, t} \frac{\| Q_T(\frac{t}{T}) - Q_T(\frac{s-1}{T}) \|}{d( \frac{t}{T}, \frac{s-1}{T} )}.
\end{align*}
Analogously to the fixed horizon case, $H_0$ is rejected if the statistic exceeds the $(1-\alpha)$ quantile of its limiting null distribution.
Since a maximum over a non-compact set can be unbounded, we need further restrictions on the boundary functions for infinite horizon monitoring.

\begin{assumption} \label{ass:boundary-infinite}
The boundary functions that are defined in Assumptions \ref{ass:boundary} and \ref{ass:boundarystacked} satisfy $\sup_{r > 1} \sqrt r / d(r-1) < \infty$ and $\sup_{r > 1} \sup_{s \in (1,r)} \sqrt r / d(r,s) < \infty$.
\end{assumption}

\noindent
Under this assumption we show the following theorem:
\begin{theorem} \label{thm:infinite}
Let $\beta_t = \beta_0 $ for all $t$, and let Assumptions \ref{assumption}--\ref{ass:boundary-infinite} hold true.
Then, as $T \to \infty$,
\begin{align*}
	\mathcal{Q}_{T,\infty} &\Dlim \sup_{r \in (0,1)} \frac{\| B^{(k)}(r) \|}{(1-r) d(\tfrac{r}{1-r})}, \\
	\mathcal{SBQ}_{T,\infty} 
	&\Dlim \sup_{ r \in (0,1)} \sup_{s \in (0,r)} \frac{\| (1-s) B^{(k)}(r) - (1-r) B^{(k)}(s) \|}{(1-r) (1-s) d(\frac{1}{1-r}, \frac{1}{1-s})},
\end{align*}
where $B^{(k)}(r)$ is a $k$-dimensional standard Brownian bridge.
\end{theorem}

\noindent
The linear boundary $d_\text{lin}(r)$ (see equation \eqref{linearboundary}) satisfies Assumption \ref{ass:boundary-infinite}, whereas for the linear triangular boundary $d_{\text{sbq}}(r,s)$ (see equation \eqref{eq:linearboundarySBQ}) it is not satisfied. 
Instead, the boundary must be at least of order $\sqrt r$ uniformly among all $s$, which motivates the alternative boundary
\begin{align*}
	d_{\text{inf}}(r,s) = \sqrt r(1 + 2(r-s)), \quad 0 \leq s \leq r.
\end{align*}
Simulated critical values for the {\it $M$-tests} under the boundaries $d_\text{lin}(r)$ and $d_\text{inf}(r,s)$ are presented in Table \ref{tab:crit3}.

\section{Estimation of the breakpoint location}  \label{sec:breakpoint}

As soon as the testing procedure has indicated a structural instability in the coefficient vector, the next step is to locate the break point.
In the single break model with $\beta_t = \beta_0 + \delta 1_{\{t \geq T^*\}}$, where $\delta \neq 0$,
\cite{horvath1995} suggested to estimate the relative break date $\tau^* = T^* / T$ by the relative time index for which the likelihood ratio statistic is maximized.
As an asymptotically equivalent estimator, \cite{bai1997} proposed the maximum likelihood estimator
\begin{equation}
	\widehat \tau^{\text{ret}}_{ML} = T^{-1} \cdot \argmin_{t= 1, \ldots, T} \{ R_1(t) + R_2(t) \},	\label{eq:MLest}
\end{equation}
where $R_1(t)$ is the OLS residual sum of squares when using observations until time point $t$ and $R_2(t)$ is the OLS residual sum of squares when using observations from time $t+1$ onwards.
In case of monitoring, \cite{chu1996} considered
\begin{equation}
	\widehat \tau^{\text{mon}}_{ML} = T^{-1} \cdot \argmin_{t=T+1, \ldots, T_d} \{ R_1(t) + R_2(t) \}  \label{eq:MLestMon}
\end{equation}
to estimate $\tau^*_{\text{mon}} = T^* / T_d$,
where $T_d$ denotes the detection time point, which is the stopping time at which the detector statistic exceeds the boundary function for the first time.
The maximum likelihood estimator is very accurate if the breakpoint is located in the middle of the sample.
However, by construction, the true breakpoint $T^*$ tends to be close to the stopping time $T_d$, and $R_2(T^*)$ is computed from very few observations, which may lead to a large finite sample estimation error for the maximum likelihood estimator.
A theoretical explanation for this effect is given in \cite{casini2018a,casini2021a}, where the finite-sample distribution of the least squares estimator is investigated using a continuous record asymptotic framework.

To bypass this problem, we use backwardly cumulated recursive residuals to estimate the relative break location.
In the single break model, $\|BQ_{\rT,T}\|$ is asymptotically proportional to $\| h(1) - h(r) \|$, which is constant in the pre-break period and decreases to zero in the post-break period.
When scaled by its asymptotic standard deviation, the detector is asymptotically proportional to $\|h(1) - h(r)\|/\sqrt{1-r}$, which in turn (see equation \eqref{eq:h(r)}) is proportional to 
\begin{equation*}
	\big(- \ln(\tau^*) 1_{\{r < \tau^*\}} - \ln(r) 1_{\{ r \geq \tau^* \}}\big)/\sqrt{1-r},	
\end{equation*}
where the maximum is attained at $r = \tau^*$.
Accordingly, we consider
\begin{equation}
	\widehat \tau_{\text{ret}} = \frac{1}{T} \cdot \argmax_{t= 1, \ldots, T} \frac{\|BQ_{t,T} \|}{\sqrt{T-t+1}}, \qquad 
	\widehat \tau_{\text{mon}} = \frac{1}{T} \cdot \argmax_{t=T+1, \ldots, T_d} \frac{\|BQ_{t,T_d} \|}{\sqrt{T_d+t-1}}. \label{eq:BQest}
\end{equation}

\begin{theorem} \label{thm_breakdate}
Let $\beta_t = \beta_0 + \delta 1_{\{t/T \geq \tau^*\}}$, where $\delta \neq 0$, and let Assumption \ref{assumption} hold true. 
If $\tau^* \in (0,1]$, then $\widehat \tau_{\text{ret}} \pplim \tau^*$, as $T \to \infty$;
if $\tau^* \in (1,T_d/T]$, then $\widehat \tau_{\text{mon}} \pplim \tau^*$, as $T \to \infty$.
\end{theorem}
This result implies that the breakpoint estimators \eqref{eq:BQest} are consistent, as $T \to \infty$.

\section{Finite sample performance} \label{sec:simulations}
\begin{table}[tb]
	\caption{Empirical sizes and powers of the $R$-tests}
	\centering
\begin{footnotesize}
\begin{tabular}{l|rrrr|rrrr|rrrr}
	& & \multicolumn{2}{c}{model I} & & & \multicolumn{2}{c}{model II} & & & \multicolumn{2}{c}{model III} \\ \hline
& $\mathcal{BQ}_T$ & $\mathcal{SBQ}_T$ & $\mathcal Q_T$ & supW & $\mathcal{BQ}_T$ & $\mathcal{SBQ}_T$ & $\mathcal Q_T$ & supW & $\mathcal{BQ}_T$ & $\mathcal{SBQ}_T$ & $\mathcal Q_T$ & supW  \\  \hline
size & 4.3 & 3.4 & 4.2 & 4.6 & 4.4 & 4.1 & 4.1 & 4.8 & 4.9 & 4.2 & 5.3 & 5.8 \\ \hline
$\tau^*$  & \multicolumn{12}{c}{power}  \\ \hline
0.1 & 54.4 & 67.4 & 73.6 & 51.8 & 51.2 & 63.0 & 68.2 & 53.0 & 24.3 & 26.5 & 40.5 & 36.1 \\ 
0.4 & 99.8 & 99.5 & 92.3 & 99.8 & 99.9 & 99.7 & 93.2 & 99.9 & 98.2 & 94.6 & 66.9 & 98.7 \\ 
0.6 & 99.9 & 99.4 & 72.5 & 99.8 & 99.8 & 99.3 & 72.9 & 99.9 & 99.4 & 96.8 & 47.0 & 98.8 \\ 
0.9 & 56.4 & 25.2 & 5.8 & 46.9 & 52.3 & 33.2 & 6.0 & 49.8 & 49.8 & 19.9 & 6.6 & 34.7 \\ 
\hline
\end{tabular}
\hspace*{0.1ex}
\parbox{15.4cm}{ \vspace{0.5ex} \spacingset{1} \scriptsize
Note: Rejection rates of the retrospective tests are reported for a significance level of $5\%$ and a sample size of $T=200$.
The results are obtained from 100,000 Monte Carlo repetitions under the linear boundaries $d_{\text{lin}}(r)$ and $d_{\text{sbq}}(r,s)$.
The sup-Wald test by \cite{andrews1993} with trimming parameter $0.15$ is denoted as supW.
}
\end{footnotesize}
\label{tab:Rtests}
\end{table}

We illustrate the finite sample performance of the \textit{$R$-tests} and \textit{$M$-tests} for the models
\begin{align}
	y_t &= \gamma_t + u_t, 	\label{eq_simmodel1} \tag{model I} \\
	y_t &= 1 + \gamma_t z_t + u_t,    \tag{model II} \label{eq_simmodel2} \\
	y_t &= \gamma_t + 0.5 y_{t-1} + u_t, \quad (t=1, \ldots, T),   \tag{model III} \label{eq_simmodel3}
\end{align}
where $\gamma_t = 0.8 \cdot 1_{\{t/T \geq \tau^*\}}$, $u_t$ and $e_t$ are independent and $NID(0,1)$, and $z_t = (1-0.5L) e_t$, where $L$ is the lag operator.
For \ref{eq_simmodel1} and \ref{eq_simmodel2} we consider the full structural break tests, and for \ref{eq_simmodel3} partial break tests with $H=(1,0)$ are considered (see Remark \ref{rem:partial}).

\subsection{Retrospective tests (R-tests)}

In Table \ref{tab:Rtests} the empirical sizes and powers of the retrospective tests are compared with that of the sup-Wald test of \cite{andrews1993}.
First, we observe that $\mathcal{BQ}_T$ and $\mathcal{SBQ}_T$ outperform $\mathcal{Q}_T$, except for the case $\tau^* = 0.1$.
Second, while $\mathcal{Q}_T$ has much lower power than the sup-Wald test, the reversed order cumulation structure in $\mathcal{BQ}_T$ and $\mathcal{SBQ}_T$ seems to compensate for the weakness of $\mathcal{Q}_T$. 
\cite{andrews1993} showed that the sup-Wald test is weakly optimal in the sense that, in the case of a single structural break, its asymptotic local power curve approaches the power curve from the infeasible point optimal maximum likelihood test, as the significance level tends to zero.
Within the framework of the considered models, $\mathcal{BQ}_T$ performs similarly well as the sup-Wald test and thus has comparably good power properties as the weakly optimal test.
In contrast to $\mathcal{SBQ}_T$, the the sup-Wald test is not suitable for monitoring since its statistic is not measurable with respect to the filtration of information at time $t$.

\subsection{Monitoring procedures (M-tests)}

\begin{table}[tb]
	\caption{Empirical sizes and mean detection delays of the fixed endpoint $M$-tests}
	\centering
\begin{footnotesize}
\begin{tabular}{l|rr|rr|rr}
& \multicolumn{2}{c|}{model I} & \multicolumn{2}{c|}{model II} & \multicolumn{2}{c}{model III} \\ \hline
 & $\mathcal{SBQ}$ & $\mathcal Q$ & $\mathcal{SBQ}$ & $\mathcal Q$ & $\mathcal{SBQ}$ & $\mathcal Q$ \\ \hline
size & 4.0 & 4.6 & 6.4 & 5.6 & 4.4 & 5.0 \\ \hline
$\tau^*$  & \multicolumn{6}{c}{absolute mean detection delay} \\ \hline
1.1 & 27.4 & 27.3 & 26.9 & 27.1 & 29.1 & 28.8 \\ 
1.2 & 26.6 & 31.9 & 26.2 & 31.1 & 27.9 & 34.6 \\ 
1.3 & 26.1 & 36.5 & 25.9 & 35.3 & 27.2 & 40.2 \\ 
1.4 & 25.9 & 40.9 & 25.7 & 39.3 & 26.8 & 45.1 \\ 
1.5 & 25.8 & 44.7 & 25.4 & 42.2 & 26.4 & 47.7 \\ \hline \hline
\end{tabular}
\hspace*{0.1ex}
\parbox{14cm}{ \vspace{0.5ex} \spacingset{1} \scriptsize
Note: The empirical sizes are reported in percentage points. 
Starting in row 4, the empirical mean detection delays are presented for different breakpoints $\tau^*$.
Results are obtained for a significance level of $5\%$, a pre-monitoring sample size of $T=200$, a monitoring period of $m=2$, boundaries $d_{\text{lin}}(r)$ and $d_{\text{sbq}}(r,s)$, and from 100,000 Monte Carlo repetitions.
Critical values from Tables \ref{tab_crit1} and \ref{tab_crit2} are implemented.
}
\end{footnotesize}
\label{tab:fixeddelay}
\end{table}

Fixed endpoint {\it $M$-tests} are particularly useful when the monitoring period is short. 
{\it $M$-tests} with infinite horizon can be used for long monitoring periods of arbitrary length.
To evaluate the performance of the $M$-tests for finite samples, we simulate the same models as for the {\it $R$-tests} for time points $t=1, \ldots, mT$, where we specify $m = 2$ for the fixed endpoint tests and $m=20$ for the infinite horizon tests.
The results in Table \ref{tab:fixeddelay} show that the mean delay for $\mathcal{SBQ}_{T,m}$ is much lower than that of $\mathcal{Q}_{T,m}$ and is almost constant across the breakpoint locations.

\begin{table}[tb]
	\caption{Empirical sizes and mean detection delays of the infinite horizon $M$-tests}
	\centering
\begin{footnotesize}
\begin{tabular}{l|rrrr|rr|rrrr}
	& & \multicolumn{2}{c}{model I} & &  \multicolumn{2}{c|}{model II}  & & \multicolumn{2}{c}{model III} \\ \hline
 & $\mathcal{SBQ}$ & $\mathcal{Q}$ & CSW & FR & $\mathcal{SBQ}$ & $\mathcal{Q}$ & $\mathcal{SBQ}$ & $\mathcal{Q}$ & CSW & FR  \\  \hline
size & 3.8 & 4.8 & 2.2 & 4.3 & 6.4 & 5.8 & 6.3 & 5.2 & 2.2 & 6.1 \\ \hline
\multicolumn{1}{l|}{$\tau^*$}  & \multicolumn{10}{c}{absolute mean detection delay} \\ \hline
1.5 & 27.6 & 46.4 & 70.2 & 53.6 & 27.3 & 44.2 & 29.7 & 52.8 & 84.1 & 51.8 \\ 
2 & 33.8 & 69.5 & 80.3 & 76.7 & 32.9 & 65.6 & 36.7 & 82.2 & 94.3 & 74.8 \\ 
4 & 57.2 & 162.2 & 111.9 & 169.8 & 52.9 & 150.8 & 64.4 & 200.9 & 126.3 & 167.7 \\ 
6 & 81.1 & 254.9 & 136.9 & 263.6 & 72.3 & 235.4 & 93.9 & 319.6 & 151.3 & 261.6 \\ 
\hline \hline
\end{tabular}
\hspace*{0.1ex}
\parbox{14cm}{ \vspace{0.5ex} \spacingset{1} \scriptsize
Note: The empirical sizes are reported in percentage points. 
Starting in row 4, the empirical mean detection delays are presented for different breakpoints $\tau^*$.
Results are obtained for a significance level of $5\%$, a pre-monitoring sample size of $T=200$, a monitoring period of $m=20$, boundaries $d_{\text{lin}}(r)$ and $d_{\text{inf}}(r,s)$, and from 100,000 Monte Carlo repetitions.
Critical values from Table \ref{tab:crit3} are implemented.
The univariate infinite horizon {\it $M$-tests} by \cite{chu1996} and \cite{fremdt2015} with boundary parameter $0.25$ are denoted as CSW and FR, respectively.
}
\end{footnotesize}
\label{tab:infdelay}
\end{table}

For infinite horizon monitoring, $\mathcal{SBQ}_{T,\infty}$ performs similarly well compared to conventional tests (see Table \ref{tab:infdelay}). 
The detection delay of $\mathcal{Q}_{T,\infty}$ is much higher than that of $\mathcal{SBQ}_{T,\infty}$, and the gap increases further with increasing $\tau^*$. 
Compared to the tests of \cite{chu1996} and \cite{fremdt2015}, we find a similar picture. 
Note that the two alternative tests have no power in \ref{eq_simmodel2} and are therefore omitted for this case.

\subsection{Breakpoint estimators}

\begin{table}[tb]
\centering
\caption{Bias and RMSE of breakpoint estimators}
\begin{footnotesize}
\begin{tabular}{r|rrrr|rrrr}
& \multicolumn{4}{c|}{$T=100$} & \multicolumn{4}{c}{$T=200$} \\ \hline
 & \multicolumn{2}{c}{Bias} & \multicolumn{2}{c|}{RMSE} & \multicolumn{2}{c}{Bias} & \multicolumn{ 2}{c}{RMSE} \\ \hline
$\tau^*$ & BQ & ML & BQ & ML & BQ & ML & BQ & ML \\ 
0.5 & $-$0.03 & $\phantom{-}$0.01 & $\phantom{-}$0.14 & $\phantom{-}$0.11 & $-$0.02 & $\phantom{-}$0.01 & $\phantom{-}$0.08 & $\phantom{-}$0.05 \\ 
0.65 & $-$0.03 & $\phantom{-}$0.00 & $\phantom{-}$0.14 & $\phantom{-}$0.12 & $-$0.02 & $\phantom{-}$0.00 & $\phantom{-}$0.08 & $\phantom{-}$0.05 \\ 
0.8 & $-$0.03 & $-$0.04 & $\phantom{-}$0.15 & $\phantom{-}$0.18 & $-$0.01 & $-$0.01 & $\phantom{-}$0.09 & $\phantom{-}$0.08 \\ 
0.85 & $-$0.04 & $-$0.07 & $\phantom{-}$0.17 & $\phantom{-}$0.22 & $-$0.02 & $-$0.02 & $\phantom{-}$0.10 & $\phantom{-}$0.11 \\ 
0.9 & $-$0.06 & $-$0.13 & $\phantom{-}$0.19 & $\phantom{-}$0.30 & $-$0.03 & $-$0.04 & $\phantom{-}$0.12 & $\phantom{-}$0.17 \\ 
0.95 & $-$0.10 & $-$0.25 & $\phantom{-}$0.24 & $\phantom{-}$0.43 & $-$0.05 & $-$0.14 & $\phantom{-}$0.17 & $\phantom{-}$0.32 \\ 
0.97 & $-$0.13 & $-$0.33 & $\phantom{-}$0.28 & $\phantom{-}$0.50 & $-$0.08 & $-$0.24 & $\phantom{-}$0.22 & $\phantom{-}$0.42 \\ 
0.99 & $-$0.20 & $-$0.44 & $\phantom{-}$0.35 & $\phantom{-}$0.58 & $-$0.15 & $-$0.40 & $\phantom{-}$0.30 & $\phantom{-}$0.56 \\  \hline \hline
\end{tabular}
\hspace*{0.4ex}
\parbox{12cm}{ \vspace{0.5ex} \spacingset{1} \scriptsize
Note: The bias and root mean squared error (RMSE) for the break date estimators \eqref{eq:MLest} and \eqref{eq:BQest} are reported based on 100,000 Monte Carlo repetitions, where model \eqref{eq_simmodel1} is simulated for $t = 1, \ldots, T$.
BQ denotes the backward CUSUM estimator \eqref{eq:BQest}, and ML denotes the maximum likelihood estimator \eqref{eq:MLest}.
}
\end{footnotesize}
\label{tab:breakestimation}
\end{table}

To compare the breakpoint estimator in equation \eqref{eq:BQest} with its maximum likelihood benchmark in \eqref{eq:MLest} and \eqref{eq:MLestMon}, we present Monte Carlo simulation results in Table \ref{tab:breakestimation}.
If the break $\tau^*$ is located after $85 \%$ of the sample, the estimator based on backwardly cumulated recursive residuals has a much lower bias and root mean squared error than the maximum likelihood estimator, which is due to the fact that the post-break entails only few observations.

\begin{figure}[t]
\caption{Ljung-Box and robust Q statistics of the residuals}
\centering
\includegraphics[scale=0.39]{./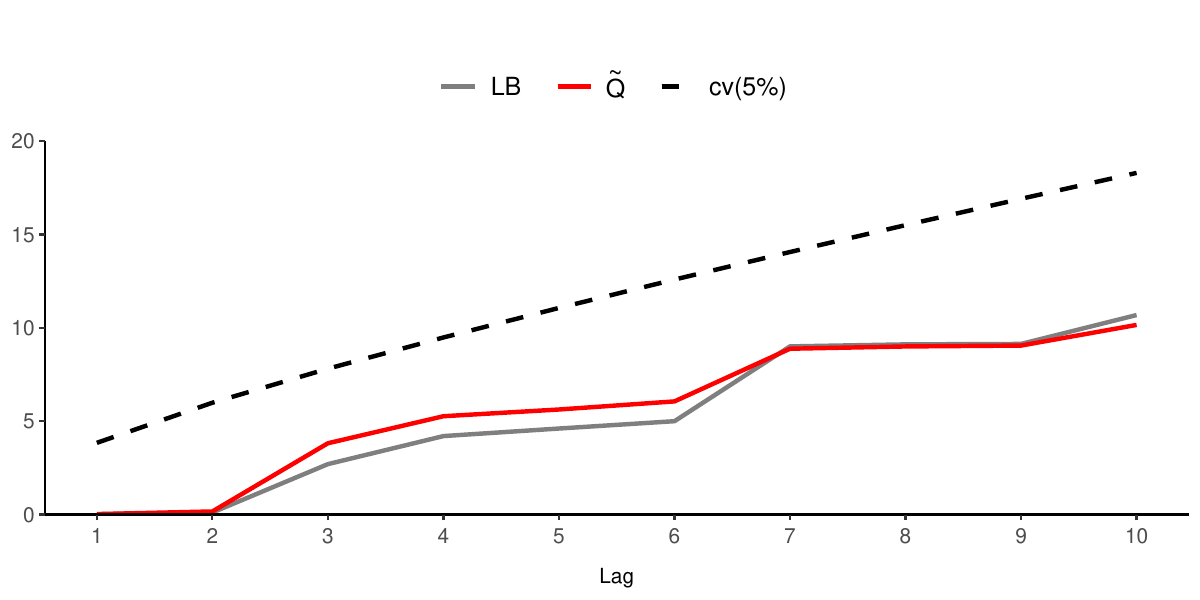} 
\includegraphics[scale=0.39]{./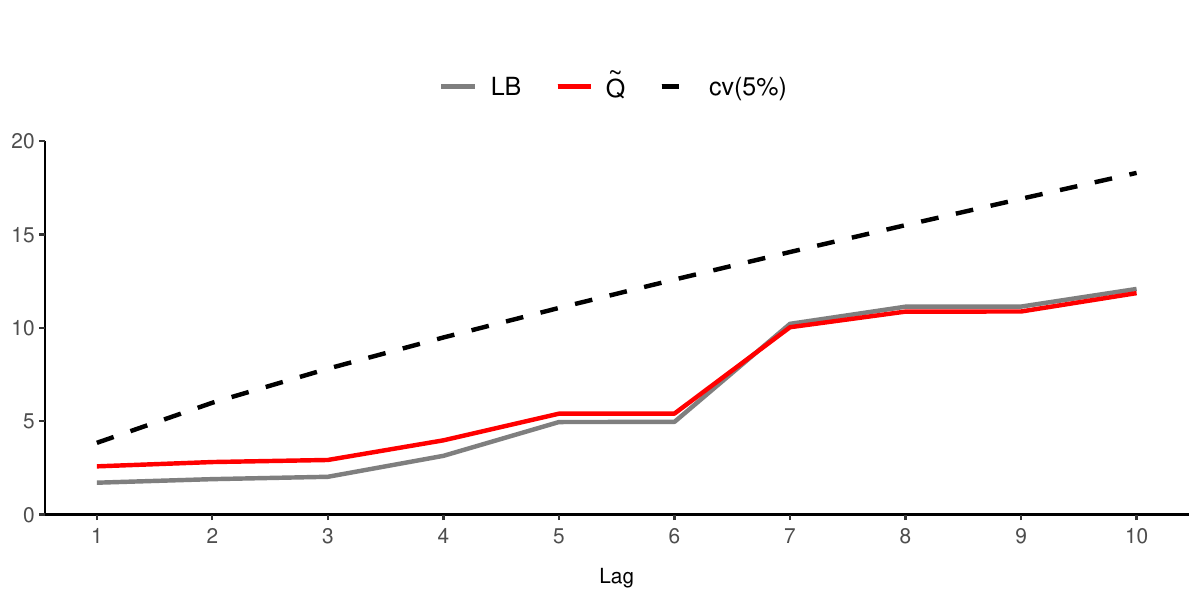}
\parbox{16cm}{ \vspace{0.5ex} \spacingset{1} \scriptsize
Note: The cumulative Ljung-Box (LB) and robust Q-statistics ($\widetilde Q$) of \cite{dalla2020} are plotted for the residuals of model \eqref{eq1}. The left plot shows the statistics using the sample from the first pre-monitoring training period (April 10 until May 21) and the right plot for the second pre-monitoring training period (July 20 until August 30). The dashed line indicates the $5\%$ critical values. The plots are created using the \texttt{R}-package \textit{testcorr} provided by \cite{dalla2020}.
}
\label{fig:Covid2}
\end{figure}

\section{Empirical application to Covid-19 infections} \label{sec:covid}

\begin{figure}[hp]
\caption{Monitoring daily new COVID-19 infections in the US}
\centering
\includegraphics[scale=0.214]{./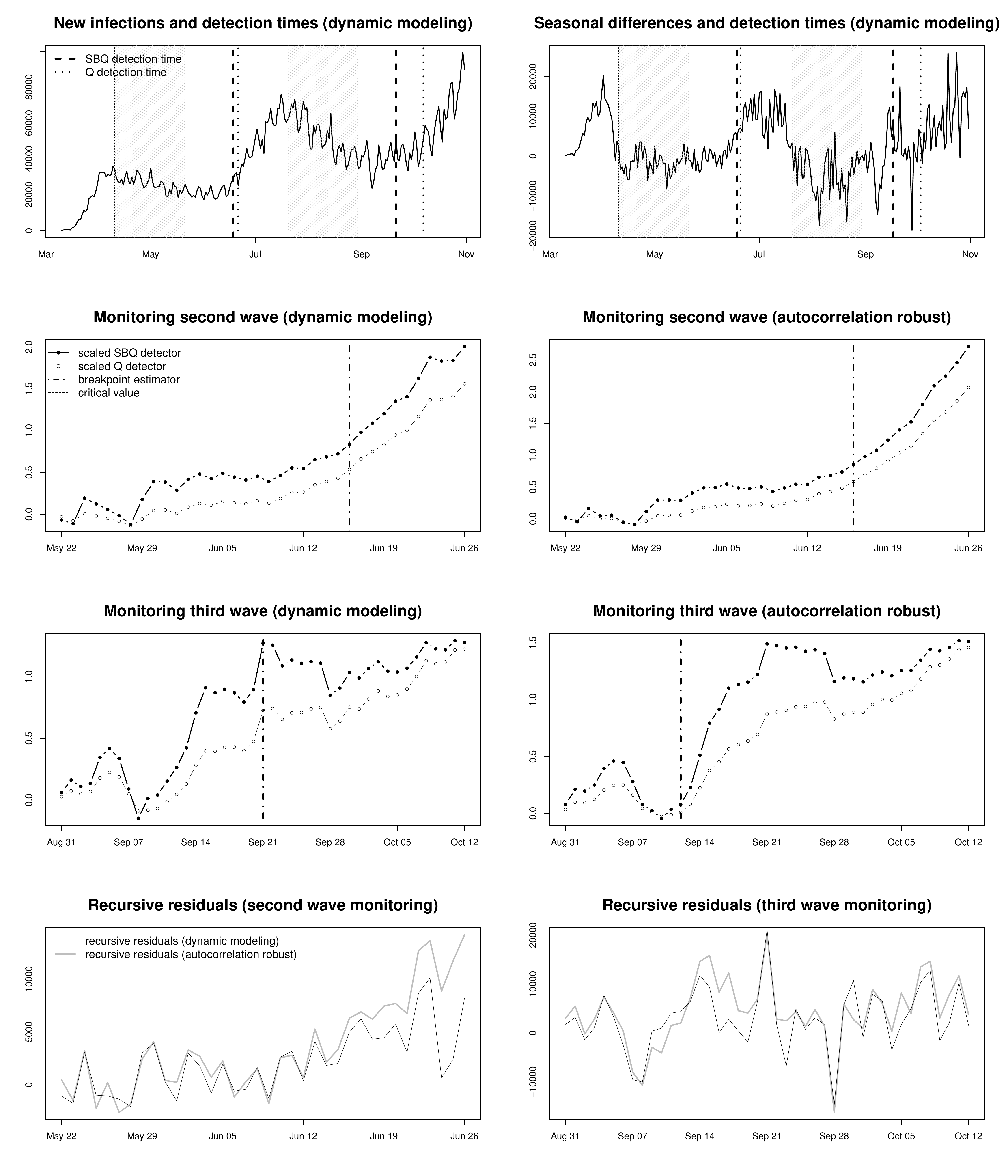}
\parbox{16cm}{ \vspace{0.5ex} \spacingset{1} \scriptsize
Note: The results of the {\it $M$-tests} for the dynamic model (left column) and the autocorrelation robust {\it $M$-tests} (right column) are presented.
The plots show the new infections (first row) and the seasonally differenced infections (second row). The third and fourth row present the standardized recursive residuals (solid), the scaled stacked backward CUSUM detector (dashed), and the scaled forward CUSUM detector (dotted).
The shaded areas represent the pre-monitoring training periods, 
the dashed vertical lines represent the detection time point of the stacked backward CUSUM,
the dotted vertical lines are the detection times of the forward CUSUM, and the vertical dash-dotted lines are the backward CUSUM breakpoint estimators given by equation \eqref{eq:BQest}. 
The horizontal solid line represents the critical boundary of the scaled detectors.
}
\label{fig:Covid}
\end{figure}

We consider the time series $y_t$ of daily new cases of SARS-CoV-2 infections in the US during the first months of the COVID-19 pandemic.
The starts of the first three waves of infections are observed in March, June, and September 2020.
To monitor for a second and third wave of infections, we consider pre-monitoring training samples of six weeks ($T=46$) starting at the times after the first and second peak. 
The training sample periods are given by April 10 until May 21, and July 20 until August 30.
To account for a seasonal unit root in $y_t$ we consider the seasonally differenced series $\widetilde y_t = y_t - y_{t-7}$. 
Weekly differences are used to account for the seasonalities resulting from the weekly reporting pattern of COVID-19 data with lower numbers on weekends.
We estimate the dynamic model
\begin{equation}
\widetilde y_t = \phi_0 + \phi_1 \widetilde y_{t-2} + \phi_2 \widetilde y_{t-7} + u_{t} = x_t^\T \beta + u_{t}, \label{eq1} 
\end{equation}
where $x_t=(1,\widetilde y_{t-1}, \widetilde y_{t-7})^\T$, and $\beta=(\phi_0,\phi_1,\phi_2)^\T$. 
The parameters for lags 2 and 7 are the only significant autoregressive parameters.

Both the Ljung-Box and the robust Q-statistic of \cite{dalla2020} do not indicate any significant autocorrelation in the residuals for the pre-break training periods (see Figure \ref{fig:Covid2}).
We consider the infinite-horizon stacked backward CUSUM statistic for a break in the intercept and the infinite horizon forward CUSUM of \cite{chu1996}.
We are interested in detecting positive changes in the intercept $\phi_0$ and apply one-sided infinite horizon monitoring statistics with a significance level of $5 \%$.
The critical values for the partial right-sided tests are given by those of the full test with $\alpha = 0.1$ and $k=1$ (see Remark \ref{rem:partial}).
We consider the infinite-horizon stacked backward CUSUM statistic for a break in the intercept and the infinite horizon forward CUSUM of \cite{chu1996}.
To compare the detector statistics, we scale them by their boundaries and critical values, 
so that $H_0$ is rejected in favor of a positive change in $\phi_0$ if the detector exceeds unity, respectively.
An alternative to the dynamic modeling in \eqref{eq1} is to apply autocorrelation robust {\it $M$-tests} to the model $\widetilde y_t = \phi_0 + u_t$, where we replace $\widehat{\sigma}_T$ with the long-run variance estimator by \cite{newey1987} (see Remark \ref{rem:serialcorrelation}).

The results are presented in Figure \ref{fig:Covid}.
Both monitoring procedures find an indication for a rise in SARS-CoV-2 infections in the US at the end of June and the end of September.
The stacked backward CUSUM detects the breaks much earlier and becomes significant between 2 and 16 days before the forward CUSUM becomes significant. 
This confirms our theoretical analysis and shows that precious time can be saved by applying the backward monitoring scheme.

\section{Conclusion} \label{sec:Conclusion}

In this paper we propose two alternatives to the conventional CUSUM detectors by \cite{brown1975} and \cite{chu1996}.
It has been demonstrated that cumulating the recursive residuals backwardly results in much higher power than using forwardly cumulated recursive residuals, in particular if the break is located at the end of the sample. 
Accordingly, the backward scheme is especially attractive for on-line monitoring. 
To this end, the stacked triangular array of backwardly cumulated recursive residuals is employed and we find that this approach yields a much lower detection delay than the monitoring procedure by \cite{chu1996}.
Due to the multivariate nature of our tests, they also have power against structural breaks that do not affect the unconditional mean of the dependent variable.
We also propose a new break date estimator which outperforms conventional estimators if the break is located at the end of the sample.

\section*{Acknowledgements}

We are thankful to Holger Dette, Josua Gösmann, Alexander Mayer, Dominik Wied and three referees for their very helpful comments and suggestions which helped to improve the paper a lot.
Furthermore, the usage of the CHEOPS HPC cluster for parallel computing is gratefully acknowledged.

\section*{Supporting Information}

An accompanying \texttt{R}-package for all methods presented in this article is available online at \texttt{https://github.com/ottosven/backCUSUM}.

\appendix
\newpage
\section{Appendix: Technical proofs}

\subsection{Auxiliary lemmas}

We first present some auxiliary lemmas which we require for the main proofs.

\begin{lemma} \label{lem_mfclt1}
Let Assumptions \ref{ass:model1} and \ref{ass:md} hold true.
Then, for any fixed $m < \infty$, as $T \to \infty$,
\begin{align*}
	\frac{1}{\sqrt{T}}\sum_{t=1}^{\rT} x_t u_t \Rightarrow \sigma C^{1/2} W^{(k)}(r), \quad r \in [0, m].
\end{align*}
\end{lemma}

\begin{proof}
Note that $M^{-1/2} \sum_{t=1}^{\sM} x_t u_t \Rightarrow \Omega^{1/2} W^{(k)}(s)$, $s \in [0,1]$, 
as $M \to \infty$, by Theorem 7.19 in \cite{white2001}.
Then, on the space $D([0,m])^k$, as $T \to \infty$,
\begin{align}
	\frac{1}{\sqrt{T}}\sum_{t=1}^{\rT}  x_t u_t = \frac{\sqrt m}{\sqrt{M}}\sum_{t=1}^{\lfloor (r/m) M \rfloor  }  x_t u_t 
	\Rightarrow  \sqrt m \Omega^{1/2}  W^{(k)}(r/m) \Deq \Omega^{1/2}  W^{(k)}(r), \quad r \in [0,m]. \label{eq:fclttransform}
\end{align}
\end{proof}

\begin{assumption}
There exists $\kappa \geq 8$ such that $\sup_t E[\| x_t \|^\kappa] < \infty$, $\sup_t E[|u_t|^\kappa] < \infty$, and $(x_t, u_t)'$ is strong mixing of size $-\kappa/(\kappa-6)$.
\label{ass:mixing}
\end{assumption}

\begin{lemma} \label{lem_mfclt2}
Let Assumptions \ref{ass:model1} and \ref{ass:mixing} hold true.
Then, for any fixed $m < \infty$, as $T \to \infty$,
\begin{align*}
	\frac{1}{\sqrt{T}}\sum_{t=1}^{\rT} x_t u_t \Rightarrow \Omega^{1/2} W^{(k)}(r), \quad r \in [0, m].	
\end{align*}
\end{lemma}

\begin{proof}
We have $M^{-1/2} \sum_{t=1}^{\sM} x_t u_t \Rightarrow \Omega^{1/2} W^{(k)}(s)$, $s \in [0,1]$, as $M \to \infty$, by Corollary 4.2 in \cite{wooldridge1988}. The result follows by analogous arguments as in \eqref{eq:fclttransform}.
\end{proof}

\begin{lemma} \label{lem:cusumapprox}
Let $\beta_t = \beta_0$ for all $t \in \mathbb N$, and let Assumption \ref{ass:model1} hold true.
Moreover, let either Assumption \ref{ass:md} or \ref{ass:mixing} hold true.
Then, for any $m < \infty$, as $T \to \infty$,
\begin{align*}
	\max_{1 \leq t \leq mT} \frac{1}{\sqrt T} \bigg\| Y_t - \sum_{j=1}^{t-1} \frac{1}{j} Y_j  - X_t \bigg\| = o_P(1),
\end{align*}
where $X_t = \sum_{j=1}^t x_j w_j$ and $Y_t = \sum_{j=1}^t x_j u_j$. 
\end{lemma}

\begin{proof}
Let $f_j = (1 + (j-1)^{-1} x_j' \widehat C_{j-1}^{-1} x_j )^{-1/2} 1_{\{ j > k \}}$.
Since $\widehat \beta_j = \beta_0 + j^{-1} \widehat C_j^{-1} Y_j$, we can represent the recursive residuals as
\begin{align*}
	w_j = f_j (y_j - x_j' \widehat \beta_{j-1}) 
		= f_j\big(u_j - (j-1)^{-1} x_j' \widehat C_{j-1}^{-1} Y_{j-1}\big).
\end{align*}		
The multivariate cumulative sum can be written as
\begin{align*}
	X_t = \sum_{j=1}^t x_j u_j f_j - \sum_{j=1}^{t-1} \frac{f_{j+1}}{j} x_{j+1} x_{j+1}' \widehat C_{j}^{-1} Y_{j},
\end{align*}
and we have
\begin{align*}
	T^{-1/2} \bigg( Y_t - \sum_{j=1}^{t-1} \frac{1}{j} Y_j  - X_t \bigg) = T^{-1/2} ( A_{t} + B_{t}),
\end{align*}
where
\begin{align}
	A_t =  \sum_{j=1}^t x_j u_j (1 - f_j), \quad
	B_t =  \sum_{j=1}^{t-1}  \big( f_{j+1} x_{j+1} x_{j+1}' \widehat C_{j}^{-1} - I_k \big) \frac{1}{j} Y_j. 	\label{eq:AtBt}
\end{align}
It remains to show that $\max_{1 \leq t \leq mT} T^{-1/2} \| A_t \| = o_P(1)$ and $\max_{1 \leq t \leq mT} T^{-1/2} \| B_t \| = o_P(1)$.
For the first part, note that, for $j > k$,
\begin{align}
	0 \leq \sqrt{j-1} (1 - f_j) = \frac{\sqrt{j-1}\sqrt{1 + \frac{1}{j-1} x_j' \widehat C_{j-1}^{-1} x_j} - 1}{\sqrt{1 + \frac{1}{j-1} x_j' \widehat C_{j-1}^{-1} x_j}}
	\leq \sqrt{x_j' \widehat C_{j-1}^{-1} x_j},	\label{eq:ft-aux}
\end{align}
which implies that $\sqrt j (1 - f_j) = O_P(1)$, as $j \to \infty$.
Since $\widehat C_j$ is uniformly positive definite, there exists a uniformly minimal eigenvalue $\lambda_{min} > 0$, which is defined as the infimum of all eigenvalues of the matrices $\{\widehat C_j\}_{j > k}$.
Any Rayleigh quotient of $C_{j-1}^{-1}$ is bounded above by $\lambda_{min}^{-1} < \infty$. Therefore, for any $j$ and $\delta > 0$,
\begin{align*}
	E[|\sqrt j (1 - f_j)|^\delta] \leq E[|x_j' \widehat C_{j-1}^{-1} x_j|^{\delta/2}] \leq \lambda_{min}^{-\delta/2} E[|x_j'x_j|^{\delta/2}] \leq \lambda_{min}^{-\delta/2} E[\|x_j\|^{\delta}].
\end{align*}
Hence, by the Hölder inequality,
\begin{align}
	&E\big[\|x_j u_j \sqrt j (1-f_j) \|^{2+\epsilon}\big]
	\leq \big(E\big[\|x_j \|^{6+3\epsilon}\big]  E\big[|u_j|^{6+3\epsilon}\big] E\big[|\sqrt j (1-f_j)|^{6+3\epsilon}\big] \big)^{1/3} < \infty,	\label{eq:ft-aux2}
\end{align}
for any $0 < \epsilon \leq (\kappa - 6)/3$.
Then,
\begin{align}
	&Var\Big[ \max_{1 \leq t \leq mT} T^{-1/2} \| A_t \| \Big] 
	\leq \frac{1}{T} Var \Bigg[ \sum_{j=1}^{\mT} \|x_j u_j (1-f_j) \| \Bigg]  = A_{T,1}^* + A_{T,2}^*,	\label{eq:meanconvergence-aux1}
\end{align}
where 
\begin{align}
	A_{T,1}^*  = \frac{1}{T} \sum_{j=1}^{\mT}  Var \big[ \| x_j u_j (1-f_j) \| \big]
	\leq \frac{1}{T} \sum_{j=1}^{\mT} \frac{1}{j} E \big[ \| x_j u_j \sqrt{j} (1-f_j) \|^2 \big], \label{eq:meanconvergence-aux2}
\end{align}
which is $O(\ln(mT) / T)$ since the harmonic series satisfies $\sum_{j=1}^{\mT} j^{-1} = O(\ln(mT))$, and
\begin{align*}
	A_{T,2}^* = \frac{2}{T} \sum_{\tau=1}^{\mT -1} \sum_{j=1}^{\mT - \tau}  Cov \big[\| x_j u_j (1-f_j) \|, \| x_{j+\tau} u_{j+\tau} (1-f_{j+\tau}) \| \big],
\end{align*}
which is zero under Assumption \ref{ass:md}.
Under Assumption \ref{ass:mixing}, we apply Corollary 14.3 of \cite{davidson1994}.
Let $\alpha(\tau)$ be the $\alpha$-mixing sequence of $(x_t',u_t)$, and consider $0 < \epsilon \leq (\kappa - 6)/3$.
Then, by \eqref{eq:ft-aux2}, there exists a constant $K < \infty$, such that
\begin{align}
	A_{T,2}^* &\leq \frac{12}{T} \sum_{\tau=1}^{\mT -1} \sum_{j=1}^{\mT - \tau} \big( E[\| x_j u_j (1-f_j) \|^{2+\epsilon}] E[\|  x_{j+\tau} u_{j+\tau} (1-f_{j+\tau}) \|^{2+\epsilon}] \big)^{1/(2+\epsilon)} \alpha(\tau)^{1-\frac{2}{2 + \epsilon}} \nonumber \\
	&\leq \frac{12 K}{T} \sum_{\tau=1}^\infty \alpha(\tau)^{1-\frac{2}{2+\epsilon}} \sum_{j=1}^{\mT} \frac{1}{j}
	= O(\ln(mT) / T) = o(1), \label{eq:meanconvergence-aux3}
\end{align}
since $1-2/(2+\epsilon) \leq (\kappa-6)/\kappa$.
Consequently, $\max_{1 \leq t \leq mT} \| A_{t,T} \| = o_P(1)$ by Chebyshev's inequality.
For the second term, we consider the decomposition $B_t = B_{t,1} + B_{t,2}$,
where
\begin{align*}
	B_{t,1} = \sum_{j=1}^{t-1} \frac{1}{j} x_{j+1} x_{j+1}' (f_{j+1} \widehat C_{j}^{-1} - C^{-1}) Y_j,  \quad
	B_{t,2} = \sum_{j=1}^{t-1} \frac{1}{j} ( x_{j+1} x_{j+1}' C^{-1} - I_k) Y_j.
\end{align*}
Note that, by \eqref{eq:ft-aux}, $\sqrt j (f_{j+1}^{-1} \widehat C_j - \widehat C_j) = O_P(1)$, as $j \to \infty$.
Thus, by Lemmas \ref{lem_mfclt1} and \ref{lem_mfclt2}, for any $s > 0$,
$\plim_{T \to \infty} x_{\sT+1} x_{\sT+1}' (f_{\sT+1} \widehat C_{\sT}^{-1} - C^{-1}) T^{-1/2} Y_{\sT} = 0$.
Consequently, as $T \to \infty$,
\begin{align*}
	\max_{1 \leq t \leq mT} T^{-1/2} \|B_{t,1} \| = \sup_{r \in [0,m]} \int_0^r x_{\sT+1} x_{\sT+1}' (f_{\sT+1} \widehat C_{\sT}^{-1} - C^{-1}) T^{-1/2} Y_{\sT} \dd s + o_P(1),
\end{align*}
which is $o_P(1)$ by the continuous mapping theorem.
For $B_{t,2}$, we apply Abel's formula of summation by parts, which is given by
\begin{align}
	\sum_{j=1}^t  A_j  b_j = \sum_{j=1}^t  A_j  b_t + \sum_{j=1}^{t-1} \sum_{i=1}^j  A_i (  b_j -  b_{j+1}), \qquad  A_j \in \mathbb{R}^{k \times k}, \quad  b_j \in \mathbb{R}^k, \quad t \in \mathbb{N}.	\label{eq:Abel}
\end{align}
By setting $A_j = x_{j+1} x_{j+1}' C^{-1} - I_k$ and $b_j = j^{-1} T^{-1/2} Y_j$ we get
\begin{align*}
	T^{-1/2} B_{t,2} &= \sum_{j=1}^{t-1}  A_j  b_{t-1} + \sum_{j=1}^{t-2} \sum_{i=1}^j  A_i (  b_j -  b_{j+1}) 
	= B_{t,T,1}^* + B_{t,T,2}^* + B_{t,T,3}^*,
\end{align*}
where
\begin{align*}
	B_{t,T,1}^* = (\widetilde C_t C^{-1} - I_k) T^{-1/2} Y_{t-1}, \quad 
	B_{t,T,2}^* = \sum_{j=1}^{t-2}  \frac{j-1}{j(j+1)} (\widetilde C_j C^{-1} - I_k) T^{-1/2} Y_j,
\end{align*}
$B_{t,T,3}^* = - \sum_{j=1}^{t-2} \frac{j-1}{j+1} (\widetilde C_j C^{-1} - I_k) T^{-1/2} x_{j+1} u_{j+1}$, and $\widetilde C_t = (t-1)^{-1} \sum_{j=2}^t x_j x_j'$.
Lemmas \ref{lem_mfclt1} and \ref{lem_mfclt2}, the continuous mapping theorem, and the fact that $\plim_{t \to \infty} \widetilde C_t = C$ imply
\begin{align*}
	\max_{1 \leq t \leq mT} \| B_{t,T,1}^* \| = \sup_{r \in [0,m]} \| (\widetilde C_{\rT} C^{-1} - I_k) T^{-1/2} Y_{\rT} \| = o_P(1),
\end{align*}
and
\begin{align*}
	\max_{1 \leq t \leq mT} \| B_{t,T,2}^* \| = \sup_{r \in [0,m]} \int_0^r \|  (\widetilde C_{\rT} C^{-1} - I_k) T^{-1/2} Y_{\rT} \| \dd r = o_P(1).
\end{align*}
For the last term, note that $(\widetilde C_j C^{-1} - I_k) T^{-1/2} x_{j+1} u_{j+1}$ is either a martingale difference sequence or strong mixing. Then, by the fact that we have bounded eighth moments, and by Theorems 24.3 and 24.6 in \cite{davidson1994}, we have $\max_{1 \leq t \leq mT} \| B_{t,T,3}^* \| = o_P(1)$, and the assertion is shown.
\end{proof}

\begin{lemma} \label{lem:cusumapprox2}
Let Assumptions \ref{ass:model1} and \ref{ass:strongprinciple} hold true, and let $\beta_t = \beta_0$ for all $t \in \mathbb N$.
Moreover, let either Assumption \ref{ass:md} or \ref{ass:mixing} hold true.
Then, as $T \to \infty$,
\begin{align*}
	\sup_{t > T} \frac{1}{\sqrt t} \bigg\| Y_t - \sum_{j=1}^{t-1} \frac{1}{j} Y_j  - X_t \bigg\| = o_P(1),
\end{align*}
where $X_t = \sum_{j=1}^t x_j w_j$ and $Y_t = \sum_{j=1}^t x_j u_j$. 
\end{lemma}

\begin{proof} 

Analogously to the proof of Lemma \ref{lem:cusumapprox}, it remains to show that
\begin{align*}
	\sup_{t > T} \frac{1}{\sqrt t} \bigg\| Y_t - \sum_{j=1}^{t-1} \frac{1}{j} Y_j  - X_t \bigg\| =  \sup_{t > T} \frac{1}{\sqrt t} \big\| A_t + B_t \big\| = o_P(1),
\end{align*}	
where $A_t$ and $B_t$ are defined in \eqref{eq:AtBt}, which satisfy $\sup_{t \leq T} T^{-1/2} \| A_t + B_t \| $.
Thus, for any $n \in \mathbb N$,
\begin{align*}
	\sup_{T < t \leq nT} t^{-1/2} \|A_t + B_t \|
	&\leq \sup_{T < t \leq nT} T^{-1/2} \|A_t + B_t \|
	\leq \sup_{t \leq nT} T^{-1/2} \|A_t + B_t \| = o_P(1), \\
	\plim_{t \to \infty} t^{-1/2} \|A_t + B_t \| 
	&\leq \plim_{t \to \infty} \sup_{l \leq t} t^{-1/2} \|A_l + B_l \| = 0, 
\end{align*}
and the assertion follows.
\end{proof}

\begin{lemma} \label{lem:FCLT-rec}
Let Assumption \ref{ass:model1} hold true, let $ \beta_t =  \beta_0$ for all $t$, and let $m < \infty$.
Under Assumption \ref{ass:md},
\begin{align*}
	\frac{1}{\sqrt T} \sum_{t=1}^{\rT}  x_t w_t \Rightarrow \sigma  C^{1/2} W^{(k)}(r), \quad r \in [0, m],
\end{align*}
and, under Assumption \ref{ass:mixing},
\begin{align*}
	\frac{1}{\sqrt T} \sum_{t=1}^{\rT}  x_t w_t \Rightarrow \Omega^{1/2} W^{(k)}(r), \quad r \in [0, m],
\end{align*}
where $W^{(k)}(r)$ is a $k$-dimensional standard Brownian motion.
\end{lemma}

\begin{proof}
Lemma \ref{lem:cusumapprox} implies that
\begin{align}
	\sup_{r \in [0,m]} \frac{1}{\sqrt T} \Big\| Y_{\rT} - \int_0^r z^{-1} Y_{\zT} \dd z - X_{\rT} \Big\| = o_P(1), \label{eq:cusumapprox-aux}
\end{align}
where $X_t = \sum_{j=1}^t x_j w_j$ and $Y_t = \sum_{j=1}^t x_j u_j$. 
Under Assumption \ref{ass:md}, Lemma \ref{lem_mfclt1}, the continuous mapping theorem, and \eqref{eq:cusumapprox-aux} imply
\begin{align*}
	\frac{1}{\sqrt T} X_{\rT} \Rightarrow \sigma C^{1/2}  \bigg( W^{(k)}(r) - \int_0^r z^{-1} W^{(k)}(z) \dd z \bigg) , \quad r \in [0, m].
\end{align*}
Analogously, under Assumption \ref{ass:mixing},
\begin{align*}
	\frac{1}{\sqrt T} X_{\rT} \Rightarrow \Omega^{1/2} \bigg( W^{(k)}(r) - \int_0^r z^{-1} W^{(k)}(z) \dd z \bigg) , \quad r \in [0, m],
\end{align*}
by Lemma \ref{lem_mfclt2}.
It remains to show that, for any $r \geq 0$,
\begin{align}
	W^{(k)}(r) - \int_0^r z^{-1} W^{(k)}(z) \dd z \Deq W^{(k)}(r).	 \label{eq:BMintegral}
\end{align}
Let $W_j(r)$ and $B_j(r)$ be the $j$-th component of $W^{(k)}(r)$ and $B^{(k)}(r)$, respectively.
We show the identities for each $j=1,\ldots,k$, separately.
Using Cauchy-Schwarz and Jensen's inequalities, we obtain $\int_0^r z^{-1} E[|W_j(z)|] \dd z < \infty$ as well as $\int_0^r z^{-1} E[|W_j(r) W_j(z)|] \dd z < \infty$, which justifies the application of Fubini's theorem in the subsequent steps.
Since both $W_j(r)$ and $F(W_j(r)) = W_j(r) - \int_0^r z^{-1} W_j(z) \dd z$ are Gaussian with zero mean, it remains to show that their covariance functions coincide.
Let w.l.o.g.\ $r \leq s$. Then,
\begin{align*}
	&E[F(W_j(r))F(W_j(s))] - E[W_j(r)W_j(s)] \\
	&=\int_0^r \int_0^s\frac{E[W_j(z_1) W_j(z_2)]}{z_1 z_2} \dd z_2 \dd z_1 - \int_0^s \frac{E[W_j(r)W_j(z_2)]}{z_2} \dd z_2  - \int_0^r \frac{E[W_j(s)W_j(z_1)]}{z_1} \dd z_1  \\
	&= (2r + r \ln(s) - r \ln(r)) - (r + r \ln(s) - r \ln(r)) - r = 0,
\end{align*}
and the assertion follows.
\end{proof}

\begin{lemma} \label{lem_SumConvergence}
Let $ h$ be a $\mathbb{R}^k$-valued function of bounded variation, and let $\{ A_t\}_{t \in \mathbb{N}}$ be a sequence of random $(k \times k)$ matrices with $\sup_{r \in [0, m]} \| T^{-1} \sum_{t=1}^{\rT} ( A_t -  A) \|_M = o_P(1)$, where $m < \infty$. Then, as $T \to \infty$,
\begin{align*}
	\sup_{r \in [0, m]} \Big\| \frac{1}{T} \sum_{t=1}^{\rT} ( A_t -  A)  h(\tfrac{t}{T}) \Big\| = o_P(1).
\end{align*}
\end{lemma}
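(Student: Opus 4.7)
The plan is to apply summation by parts to shift the action from the weights $\mathbf h(t/T)$ onto the partial sums of the centered matrices, and then exploit the hypothesized uniform convergence of those partial sums together with the fact that a function of bounded variation on a compact interval has both finite total variation and a finite supremum.

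First I would set $\vect S_t = \sum_{j=1}^t (\vect A_j - \vect A)$ and $R_T = \sup_{r \in [0,m]} \| T^{-1} \vect S_{\rT}\|_M$, which is $o_P(1)$ by assumption. Applying Abel's formula \eqref{eq:Abel} (used in the proof of Lemma \ref{lem:aux1}) with $\vect b_t = \mathbf h(t/T)$ and $\vect A_t$ replaced by $\vect A_t - \vect A$, one obtains, for every $r \in [0,m]$,
\begin{align*}
	\frac{1}{T} \sum_{t=1}^{\rT} (\vect A_t - \vect A)\mathbf h(\tfrac{t}{T})
	= \frac{\vect S_{\rT}}{T} \mathbf h(\tfrac{\rT}{T})
	+ \sum_{t=1}^{\rT - 1} \frac{\vect S_t}{T} \big(\mathbf h(\tfrac{t}{T}) - \mathbf h(\tfrac{t+1}{T})\big).
\end{align*}

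Next, using the inequality $\|Mv\| \leq \|M\|_M \|v\|$ for the induced matrix norm, the boundary term is bounded by $R_T \cdot \sup_{s \in [0,m]} \|\mathbf h(s)\|$. Since $\mathbf h$ is of bounded variation on the compact set $[0,m]$, it is bounded there, so this term is $o_P(1)$ uniformly in $r$. For the summation term, the same inequality gives
\begin{align*}
	\Big\| \sum_{t=1}^{\rT-1} \frac{\vect S_t}{T}\big(\mathbf h(\tfrac{t}{T}) - \mathbf h(\tfrac{t+1}{T})\big) \Big\|
	\leq R_T \sum_{t=1}^{\lfloor m T \rfloor - 1} \|\mathbf h(\tfrac{t}{T}) - \mathbf h(\tfrac{t+1}{T})\|
	\leq R_T \cdot V(\mathbf h; [0, m]),
\end{align*}
where $V(\mathbf h; [0,m])$ is the total variation of $\mathbf h$ on $[0,m]$, a finite deterministic constant. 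Taking the supremum over $r \in [0,m]$ on the left-hand side and combining the two bounds yields the claim.

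There is no real obstacle here beyond careful bookkeeping; the only point requiring a moment's attention is that the right-hand side of the last display does not depend on $r$, which is precisely what makes the convergence uniform. The argument works for any random $\vect A_t$ because the assumption gives a pathwise bound on $\|T^{-1}\vect S_{\rT}\|_M$ that factorizes cleanly out of both Abel summands.
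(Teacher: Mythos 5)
Your proof is correct and follows essentially the same route as the paper's: Abel summation by parts to move the increments onto $\mathbf h$, boundedness of $\mathbf h$ (from bounded variation on the compact interval) for the boundary term, and the total variation bound combined with the uniform $o_P(1)$ hypothesis for the telescoping sum. If anything, your explicit factorization of $R_T$ out of both summands and the clean appeal to $V(\mathbf h;[0,m])$ states the key estimate slightly more transparently than the paper does, but the argument is the same.
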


\begin{proof}
 By the application of Abel's formula of summation by parts, which is given in \eqref{eq:Abel}, it follows that
\begin{align*}
	\sum_{t=1}^{\rT} ( A_t -  A)  h(\tfrac{t}{T}) 
	= \sum_{t=1}^{\rT} ( A_t -  A)  h(\tfrac{\rT}{T}) + \sum_{t=1}^{\rT - 1} \sum_{j=1}^t ( A_j -  A) ( h(\tfrac{t}{T}) -  h(\tfrac{t+1}{T})).
\end{align*}
The fact that $ h(r)$ is of bounded variation yields $\sup_{r \in [0, m]} \| h(r) \| = O(1)$ as well as $\sup_{r \in [0, m]} \| \sum_{t=1}^{\rT - 1} ( h(\tfrac{t}{T}) -  h(\tfrac{t+1}{T})) t/T \| = O(1)$.
Consequently, 
\begin{align*}
	\sup_{r \in [0, m]} \Big\| \frac{1}{T} \sum_{t=1}^{\rT} ( A_t -  A)  h(\tfrac{\rT}{T}) \Big\|
	\leq \sup_{r \in [0, m]} \Big\| \frac{1}{T} \sum_{t=1}^{\rT} ( A_t -  A)  \Big\|_M  \Big\|  h(\tfrac{\rT}{T})  \Big\| = o_P(1)
\end{align*}
and
\begin{align*}
	&\sup_{r \in [0, m]} \Big\| \frac{1}{T} \sum_{t=1}^{\rT - 1} \sum_{j=1}^t ( A_j -  A) ( h(\tfrac{t}{T}) -  h(\tfrac{t+1}{T})) \Big\| \\
	&\leq \sup_{r \in [0, m]} \sum_{t=1}^{\rT - 1} \frac{t}{T} \Big\| \frac{1}{t} \sum_{j=1}^t ( A_j -  A) \Big\|_M \Big\| h(\tfrac{t}{T}) -  h(\tfrac{t+1}{T}) \Big\|
	= o_P(1).
\end{align*}
Then, by the triangle inequality, the assertion follows.
\end{proof}

\subsection{Main proofs}

\subsubsection*{Proof of Theorem \ref{thm:fclt}}

Consider the auxiliary sequence $y_t^* = x_t' \beta_0 + u_t$, which coincide with $y_t$ if $\beta_t = \beta_0$. Moreover, define $\widehat \beta_{t-1}^* = ( \sum_{j=1}^{t-1}  x_j  x_j' )^{-1} \sum_{j=1}^{t-1}  x_j y_j^*$ and $f_t = (1 + (t-1)^{-1} x_t' \widehat C_{t-1}^{-1} x_t )^{-1/2} 1_{\{ t > k \}}$. Then, $w_t^* = f_t (y_t^* -  x_t' \widehat{ \beta}_{t-1}^* )$ are recursive residuals from a regression without any structural break in the coefficients.
If $\beta_t = \beta_0 + T^{-1/2} g(t/T)$, we have $y_t =  x_t'  \beta_t + u_t = y_t^* + T^{-1/2}  x_t' g(t/T)$, and 
$\widehat \beta_{t-1} = \widehat \beta_{t-1}^* + T^{-1/2} (t-1)^{-1} \widehat C_{t-1}^{-1} \sum_{j=1}^{t-1}  x_j  x_j'  g(j/T)$, which implies that
\begin{equation*}
	w_t = w_t^* + f_t T^{-1/2}  x_t'  g(t/T) - f_t T^{-1/2} (t-1)^{-1} \widehat C_{t-1}^{-1} \sum_{j=1}^{t-1}  x_j  x_j'  g(j/T).
\end{equation*}
We decompose $T^{-1/2} \sum_{t=1}^{\rT}  x_t w_t =  \mathcal S_{1,T}(r) + \mathcal S_{2,T}(r) + \mathcal S_{3,T}(r)$, where
\begin{align*}
	\mathcal S_{1,T}(r) &= \frac{1}{\sqrt{T}} \sum_{t=1}^{\rT}  x_t w_t^*, \quad
	\mathcal S_{2,T}(r) = \frac{1}{T} \sum_{t=1}^{\rT} f_t  x_t  x_t'  g(\tfrac{t}{T}),  \\
	\mathcal S_{3,T}(r) &= - \frac{1}{T} \sum_{t=1}^{\rT} f_t (t-1)^{-1}  x_t  x_t' \widehat C_{t-1}^{-1} \sum_{j=1}^{t-1}  x_j  x_j'  g(\tfrac{j}{T}). 
\end{align*}
Lemma \ref{lem:FCLT-rec} yields $\mathcal S_{1,T}(r) \Rightarrow \Omega^{1/2} W^{(k)}(r)$, where $\Omega^{1/2} = \sigma  C^{1/2}$ under Assumption \ref{ass:md}.
Analogously to \eqref{eq:meanconvergence-aux1}--\eqref{eq:meanconvergence-aux3}, we have
\begin{align*}
	E \bigg[ \sup_{r \in [0,m]} \Big\| \frac{1}{T} \sum_{t=1}^{\rT} (f_t - 1) x_t x_t' \Big\|_M^2 \bigg] =o(1).
\end{align*}
Moreover, $\sup_{r \in [0,m]} \| \widehat C_{\rT} - C \|_M = o_P(1)$ due to Assumption \ref{ass:model1}(b), and, consequently,
\begin{align}
\sup_{r \in [0,m]} \Big\| \frac{1}{T} \sum_{t=1}^{\rT} f_t x_t x_t' - C \Big\|_M = o_P(1). \label{f_A2convergence}
\end{align}
Since $ g(r)$ is piecewise constant and therefore of bounded variation, Lemma \ref{lem_SumConvergence} yields
\begin{align*}
	&\sup_{r \in [0, m]} \Big\| \mathcal S_2(r) - \int_0^r  C  g(s) \dd s \Big\| 	 
	= \sup_{r \in [0, m]} \Big\| \frac{1}{T} \sum_{t=1}^{\rT} (f_t  x_t  x_t' -  C)  g(\tfrac{t}{T}) \Big\|  
	= o_P(1). 
\end{align*}
For the third term, let
\begin{align*}
	p_{1,t} =  \widehat C_t^{-1} \frac{1}{t} \sum_{j=1}^t x_j x_j' g(\tfrac{j}{T}), \quad
	p_{2,t} = \widehat C_t^{-1} \frac{1}{t} \sum_{j=1}^{t}  C  g(\tfrac{j}{T}), \quad
	p_{3,t} = \frac{1}{t} \sum_{j=1}^{t} g(\tfrac{j}{T}).
\end{align*}
From Assumption \ref{ass:model1}(b) and Lemma \ref{lem_SumConvergence},
$\|p_{1,T} - p_{3,T} \| \leq \|p_{1,T} - p_{2,T} \| + \|p_{2,T} - p_{3,T} \| = o_P(1)$.
Consequently,
\begin{align*}
	&\sup_{r \in [0, m]} \Big\| \mathcal S_{3,T}(r) + \frac{1}{T} \sum_{t=1}^{\rT} f_t  x_t  x_t'  p_{3,t-1} \Big\|
	\leq \int_0^m \| f_{\sT} x_{\sT} x_{\sT}' (p_{3,\sT} - p_{1,\sT}) \dd s \| = o_P(1).
\end{align*}
Since $p_3(r) = p_{3,\rT}$ is a partial sum of a piecewise constant function, it is of bounded variation, and, together with \eqref{f_A2convergence}, Lemma \ref{lem_SumConvergence} yields
\begin{align*}
	\sup_{r \in [0, m]} \Big\| \frac{1}{T} \sum_{t=1}^{\rT} ( f_t^{-1}  x_t  x_t' -  C)  p_{3,t-1} \Big\| = o_P(1),
\end{align*}
which implies that $\sup_{r \in [0, m]} \| \mathcal S_{3,T}(r) + \frac{1}{T}  C \sum_{t=1}^{\rT}  p_{3,t-1} \| = o_P(1)  $, and 
\begin{align*}
	\sup_{r \in [0, m]} \Big\| \mathcal S_{3,T}(r) + \int_0^r \int_0^s \frac{1}{s}  C  g(v) \dd v \dd s \Big\| 
	= o_P(1).
\end{align*}
Consequently, $\sup_{r \in [0, m]} \| \mathcal S_{2,T}(r) + \mathcal S_{3,T}(r) - \sigma C h(t/T) \| = o_P(1)$, and, by Slutsky's theorem,
\begin{align}
	\mathcal S_{1,T}(r) + \mathcal S_{2,T}(r) + \mathcal S_{3,T}(r) \Rightarrow  \sigma  C^{1/2} W^{(k)}(r) + \sigma  C h(r).		\label{eq:thm1-convergenceaux1}
\end{align}
Finally,
\begin{align*}
	Q_T(r) = \widehat \sigma_T^{-1}  C_T^{-1/2} (\mathcal S_{1,T}(r) + \mathcal S_{2,T}(r) + \mathcal S_{3,T}(r)) \Rightarrow W^{(k)}(r) +  C^{1/2}  h(r),
\end{align*}
since $\widehat \sigma_T^2$ is consistent for $\sigma^2$ (see \citealt{kramer1988}).

Note that, if we replace Assumption \ref{ass:md} by Assumption \ref{ass:mixing}, the corresponding limiting result for the autocorrelation robust statistic $\widetilde Q_T(r)$ under $H_0$ follows analogously. 
Moreover, under the local alternatives we have 
\begin{align*}
	\mathcal S_{1,T}(r) + \mathcal S_{2,T}(r) + \mathcal S_{3,T}(r) \Rightarrow  \Omega^{1/2} W^{(k)}(r) + \sigma  C h(r),
\end{align*}
in equation \eqref{eq:thm1-convergenceaux1} so that $\widetilde Q_T(r) \Rightarrow W^{(k)}(r) +  \sigma \Omega^{-1/2} C h(r)$.

\subsubsection*{Proof of Theorem \ref{thm:strongprinciple}}

By Assumption \ref{ass:strongprinciple} there exists a $k$-dimensional standard Brownian motion $W^{(k)}(t)$ such that
\begin{equation*}
	\sup_{t > T} t^{-1/2} \big\| Y_t - \Omega^{1/2} W^{(k)}(t) \big\| = o_P(1),	
\end{equation*}
and there exists a random variable $\xi$ and some $\epsilon > 0$ such that $\| Y_t - \Omega^{1/2} W^{(k)}(t) \| \leq \xi t^{1/2 - \epsilon}$, which implies that
\begin{equation*}
	\sup_{t > T} t^{-1/2} \bigg\| \sum_{j=1}^{t-1} j^{-1} \big(Y_j - \Omega^{1/2} W^{(k)}(j) \big) \bigg\| 
	\leq \sup_{t > T} t^{-\epsilon} \xi \sum_{j=1}^t j^{-1}  = o_P(1).
\end{equation*}
Therefore,
\begin{align*}
	\sup_{t > T} t^{-1/2} \bigg\| Y_t - \sum_{j=1}^{t-1} j^{-1} Y_j  - \Omega^{1/2} \Big(  W^{(k)}(t) - \sum_{j=1}^{t-1} j^{-1} W^{(k)}(j) \Big) \bigg\| = o_P(1).
\end{align*}
By Lemma \ref{lem:cusumapprox2}, from the fact that $T^{-1/2} W^{(k)}(t) \Deq W^{(k)}(t/T)$, and from \eqref{eq:BMintegral}, it follows that there exists another $k$-di\-men\-sion\-al standard Brownian motion $\widetilde W^{(k)}(t)$, such that
\begin{align*}
	\sup_{r > 1} r^{-1/2} \big\| T^{-1/2} X_{\rT} - \Omega^{1/2} \widetilde W^{(k)}(r) \big\| = o_P(1),
\end{align*}
and the assertion follows, since $\Omega = \sigma^2 C$ under Assumption \ref{ass:md}, and since $\widehat \sigma^2_T$, $\widehat C_T$, and $\widehat \Omega_T$ are consistent estimators for their population counterparts.

\subsubsection*{Proof of Theorem \ref{thm:infinite}}

By the triangle inequality, we have
\begin{align*}
	\|  Q_T(r) -  Q_T(s)\| - \| W^{(k)}(r) - W^{(k)}(s)\| \leq \| Q_T(r) - W^{(k)}(r) \| + \|  Q_T(s)  -  W^{(k)}(s) \|
\end{align*}
for any $r$ and $s$.
Thus,
\begin{align*}
	&\sup_{r > 1} \frac{\|  Q_T(r) -  Q_T(1)\|}{d(r-1)} - \sup_{r > 1} \frac{\| W^{(k)}(r) - W^{(k)}(1)\|}{d(r-1)} \\
	&\leq \sup_{r > 1} \frac{\| Q_T(r) - W^{(k)}(r) \|}{d(r-1)}  + \sup_{r > 1} \frac{\|  Q_T(1) -  W^{(k)}(1) \|}{d(r-1)} \\
	&\leq \sup_{r > 1} \bigg( \frac{\| Q_T(r) - W^{(k)}(r) \|}{\sqrt r}  \frac{\sqrt r}{d(r-1)} \bigg) + \sup_{r > 1} \bigg( \frac{\|  Q_T(1) -  W^{(k)}(1) \|}{\sqrt r} \frac{\sqrt r}{d(r-1)} \bigg) \\
	&\leq 2  \bigg( \sup_{r > 1}  \frac{\| Q_T(r) - W^{(k)}(r) \|}{\sqrt r} \bigg)  \bigg( \sup_{r > 1} \frac{\sqrt r}{d(r-1)} \bigg)   = o_P(1)
\end{align*}
for some $k$-dimensional standard Brownian motion $W^{(k)}(r)$, which implies that
\begin{align*}
	\mathcal{Q}_{T, \infty} = \sup_{r \in (1, \infty)} \frac{\|  Q_T(r) -  Q_T(1) \|}{d(r-1)} \Dlim \sup_{r \in (1,\infty)} \frac{\| W^{(k)}(r) - W^{(k)}(1)\|}{d(r-1)}.
\end{align*}
We transform the expression into a supremum over the unit interval.
A $k$-dimensional Brownian motion $W^{(k)}(r)$ and a $k$-dimensional Brownian Bridge $B^{(k)}(r)$ have the distributional relation $B^{(k)}(r) \Deq (1-r) W^{(k)}(r/(1-r))$, which can be verified by comparing their covariance functions.
Consider the bijective function $f: (0, 1) \to (0, \infty)$ given by $f(\eta) = \eta / (1-\eta)$.
The limiting distribution of $\mathcal Q_{T,m}$ can be rearranged as
\begin{align*}
	&\sup_{r \in (1,\infty)} \frac{\| W^{(k)}(r) - W^{(k)}(1)\|}{d(r-1)}
	\Deq \sup_{r \in (0,\infty)} \frac{\| W^{(k)}(r)\|}{d(r)} \\
	&=  \sup_{ \eta \in (0, 1)} \frac{\|W^{(k)}(f(\eta))\|}{d(f(\eta))}
	\Deq \sup_{ \eta \in (0, 1)} \frac{\|B^{(k)}(\eta)\|}{(1-\eta)d\big(\frac{\eta}{1-\eta}\big)}.
\end{align*}

\noindent
For the second result, we analogously have
\begin{align*}
&\sup_{r \in (1,\infty)} \sup_{s \in (1,r)} \frac{\|  Q_T(r) -  Q_T(s)\|}{d(r,s)} - \sup_{r \in (1,\infty)} \sup_{s \in (1,r)} \frac{\| W^{(k)}(r) - W^{(k)}(s)\|}{d(r,s)} \\
	&\leq \sup_{r \in (1,\infty)} \sup_{s \in (1,r)} \frac{\| Q_T(r) - W^{(k)}(r) \|}{d(r,s)} + \sup_{r \in (1,\infty)} \sup_{s \in (1,r)} \frac{\|  Q_T(s)  -  W^{(k)}(s) \|}{d(r,s)} \\
	&\leq 2 \bigg( \sup_{r \in (1,\infty)} \frac{\| Q_T(r) - W^{(k)}(r) \|}{\sqrt r} \bigg) \bigg( \sup_{r \in (1,\infty)} \sup_{s \in (1,r)} \frac{\sqrt r}{d(r,s)} \bigg) = o_P(1)
\end{align*}
for some $k$-dimensional standard Brownian motion $W^{(k)}(r)$.
Then, 
\begin{align*}
	\mathcal{SBQ}_{T, \infty} 
	= \sup_{r \in (1,\infty)} \sup_{s \in (1,r)} \frac{\|  Q_T(r) -  Q_T(s)\|}{d(r,s)}
	\Dlim \sup_{r \in (1,\infty)} \sup_{s \in (1,r)} \frac{\| W^{(k)}(r) - W^{(k)}(s)\|}{d(r,s)}.
\end{align*}
Consider again the bijective function $f$ from above, which analogously implies that
\begin{align*}
	&\sup_{r \in (1, \infty)} \sup_{s \in (1,r)} \frac{\|W^{(k)}(r) - W^{(k)}(s)\|}{d(r,s)} 
	\Deq \sup_{r \in (0, \infty)} \sup_{s \in (0,r)} \frac{\|W^{(k)}(r) - W^{(k)}(s)\|}{d(r+1,s+1)} \\
	&= \sup_{\eta \in (0, 1)} \sup_{s \in (0,f(\eta))} \frac{\|W^{(k)}(f(\eta)) - W^{(k)}(s)\|}{d(f(\eta)+1,s+1)}  
	= \sup_{\eta \in (0, 1)} \sup_{\zeta \in (0,\eta)} \frac{\|W^{(k)}(f(\eta)) - W^{(k)}(f(\zeta))\|}{d(f(\eta)+1,f(\zeta)+1)} \\
	&\Deq \sup_{\eta \in (0, 1)} \sup_{\zeta \in (0,\eta)} \frac{\|B^{(k)}(\eta)/(1-\eta) - B^{(k)}(\zeta)/(1-\zeta)\|}{d( (1 - \eta)^{-1},  (1 - \zeta)^{-1})} \\
	&= \sup_{\eta \in (0,1)} \sup_{\zeta \in (0,r)}\frac{\|(1-\zeta) B^{(k)}(\eta) - (1-\eta) B^{(k)}(\zeta)\|}{(1-\eta)(1-\zeta)d( (1 - \eta)^{-1}, (1 - \zeta)^{-1}) }.
\end{align*}

\subsubsection*{Proof of Theorem \ref{thm_breakdate}}

Adopting the notation of the local break in Theorem \ref{thm:fclt}, we have $\beta_t + T^{-1/2} g(t/T)$ with $ g(t/T) = T^{1/2}  \delta 1_{\{ t \geq T^* \}}$.
Note that
\begin{align*}
	\int_0^r 1_{\{z \geq \tau^*\}} \dd z - \int_0^r \int_0^z \frac{1}{z} 1_{\{v \geq \tau^*\}} \dd v \dd z
	= \int_0^r \frac{\tau^*}{z} 1_{\{z \geq \tau^*\}} \dd z
	= \tau^* \big( \ln(r) - \ln(\tau^*) \big) 1_{\{r \geq \tau^*\}}.
\end{align*}
By equation \eqref{f_h(r)} and Theorem \ref{thm:fclt},
\begin{align*}
	Q_T(r) - T^{1/2} \sigma^{-1} \tau^* C^{1/2} \delta \big( \ln(r) - \ln(\tau^*) \big) 1_{\{r \geq \tau^*\}} \Rightarrow W^{(k)}(r), \quad r \in [0,m].
\end{align*}
Then, by the continuous mapping theorem,
\begin{align*}
	\widehat \tau_{\text{ret}} &= \frac{1}{T} \bigg( \argmax_{1 \leq t \leq T}  \frac{ \| Q_T(1) -  Q_T (\tfrac{t+1}{T}) \| }{\sqrt{T-t+1}} \bigg)
	= \argsup_{r \in (0,1)} \frac{ \| Q_T(1) -  Q_T (r) \| }{\sqrt T \sqrt{1-r}} \\
	&= \argsup_{r \in (0,1)} \frac{ \big( \ln(1) - \ln(\tau^*) \big)1_{\{ 1\geq\tau^* \}} - \big( \ln(r) - \ln(\tau^*) \big) 1_{\{ r \geq \tau^* \}} }{\sqrt{1-r}} + o_P(1) \\
	&= \tau^* + o_P(1),
\end{align*}
since $-\ln(\tau^*)/\sqrt{1-r}$ is strictly increasing for $r \in (0, \tau^*)$ and $-\ln(r)/\sqrt{1-r}$ is strictly decreasing for $r \in [\tau^*, 1)$.
Analogously, if $\tau^* \in (1, \tau_d]$, 
\begin{align*}
	\widehat \tau_{\text{mon}} &= \frac{1}{T} \bigg( \argmax_{T < t \leq T_d}  \frac{ \| Q_{T_d}(1) -  Q_{T_d} (\tfrac{t+1}{T_d}) \| }{\sqrt{T_d-t+1}}  \bigg)
	= \argsup_{r \in (0,1)} \frac{ - \ln(r) 1_{\{ r \geq \tau^*\}} - \ln(\tau^*) 1_{\{ r < \tau^*\}} }{ \sqrt{\tau_d -r}} + o_P(1)\\
	&= \tau^* + o_P(1),
\end{align*}
where $\tau_d = T_d/T$.

\newpage
\bibliographystyle{apalike}
\bibliography{references}
\addcontentsline{toc}{section}{References}

\end{document}